\pdfoutput=1
\documentclass[runningheads]{llncs}
\usepackage[T1]{fontenc}
\usepackage{graphicx}
\usepackage{hyperref}
\usepackage{comment}
\usepackage{amsmath}    %
\usepackage{xspace}     %
\usepackage{pifont}     %
\usepackage{tikz}
\usepackage{pgfplots}
\usepackage{arydshln}
\usepackage{cleveref}
\pgfplotsset{compat=1.18}
\usepackage{url}
\usepackage{amsfonts}
\usepackage{amssymb}
\usepackage[ruled,vlined,linesnumbered]{algorithm2e}
\SetAlgorithmName{Algorithm}{algorithm}{List of Algorithms}
\SetAlgoCaptionSeparator{.}
\SetAlgoSkip{}
\SetAlgoInsideSkip{}    %
\SetKwProg{Alg}{}{}{}   %

\usetikzlibrary{arrows.meta,shapes,positioning}
\usepackage{color}
\definecolor{OIorange}{HTML}{E69F00}
\definecolor{OIskyblue}{HTML}{56B4E9}
\definecolor{OIgreen}{HTML}{009E73}
\definecolor{OIblue}{HTML}{0072B2}
\definecolor{OIvermillion}{HTML}{D55E00}
\definecolor{OIpurple}{HTML}{CC79A7}

\newcommand{\BG}{\ensuremath{\mathbb{G}}\xspace}

\newcommand{\BM}{\ensuremath{\mathbb{M}}\xspace}
\newcommand{\BN}{\ensuremath{\mathbb{N}}\xspace}

\newcommand{\BZ}{\ensuremath{\mathbb{Z}}\xspace}

\newcommand{\CA}{\ensuremath{\mathcal{A}}\xspace}
\newcommand{\CB}{\ensuremath{\mathcal{B}}\xspace}

\newcommand{\CD}{\ensuremath{\mathcal{D}}\xspace}

\newcommand{\CQ}{\ensuremath{\mathcal{Q}}\xspace}

\newcommand{\setup}{\ensuremath{\mathsf{Setup}}\xspace}
\newcommand{\genwatermark}{\ensuremath{\mathsf{Tag}}\xspace}
\newcommand{\mergewatermarks}{\ensuremath{\mathsf{Merge}}\xspace}
\newcommand{\degrade}{\ensuremath{\mathsf{Degrade}}\xspace}
\newcommand{\trace}{\ensuremath{\mathsf{Trace}}\xspace}

\newcommand{\tid}{\ensuremath{\mathsf{id}}\xspace}

\newcommand{\sktrace}{\ensuremath{\mathsf{sk_{trace}}}\xspace}

\newcommand{\negl}{\ensuremath{\mathsf{negl}}\xspace}

\newcommand{\pp}{\ensuremath{pp}\xspace}
\newcommand{\secparam}{\ensuremath{\lambda}\xspace}

\newcommand{\etal}{\emph{et al.}\xspace}

\begin{document}
\pagestyle{plain}
\title{Track me if you can: Ephemeral coin tracing}
\author{Ignacio Amores-Sesar\inst{1}\orcidID{0000-0002-1751-1515} \and
Christian Cachin\inst{2}\orcidID{0000-0001-8967-9213} \and
Rohit Chatterjee\inst{3}\orcidID{0009-0004-1970-1779} \and
Luiza Soezima\inst{1}\orcidID{0000-0002-4766-1506} \and
François-Xavier Wicht\inst{2}\orcidID{0009-0005-6090-7901} \and
Michelle Yeo\inst{1,4}\orcidID{0009-0001-3676-4809}
}
\authorrunning{I. Amores-Sesar et al.}
\institute{Aarhus University, Denmark\\
\email{\{amores-sesar,lbrsoezima\}@cs.au.dk}
\and
University of Bern, Switzerland\\
\email{\{christian.cachin,francois-xavier.wicht\}@unibe.ch}\\
\and
National University of Singapore, Singapore\\
\email{rochat@nus.edu.sg}
\and
Nanyang Technological University, Singapore\\
michellexyeo@gmail.com}
\maketitle              %
\begin{abstract}
Privacy-preserving payment systems are well understood, yet their adoption in regulated settings, such as central bank digital currencies (CBDCs), institutional stablecoins, and other compliant payment infrastructures, has been limited by concerns over their potential misuse for illicit activities. Regulators counter financial crime with a toolbox of complementary measures to identify, trace, and stop criminal actors. Tracing is one key tool: acting on outside evidence that a user is implicated in a crime such as money laundering, law enforcement follows the suspect's funds through the ledger to uncover laundering routes and accomplices. The tracing schemes proposed in the literature, however, grant authorities unbounded capabilities: once initiated, tracing propagates through the transaction graph or persists across all future transactions of a user, and may eventually deanonymize the entire ledger. Only the goodwill of the authority, or the honesty of a committee, keeps surveillance targeted and temporary.

We introduce \emph{ephemeral coin tracing} (ECT), a primitive whose tracing capacity is bounded by construction, both in the number of simultaneously traced users and in the number of hops each trace survives.
The authority issues tracing tags that degrade at each hop; after a protocol-defined number of hops, a tag collapses into a value indistinguishable from that of an untagged coin. Within a tracing period the bound is absolute: no authority, however motivated, can follow a tag past its budget. 
We formalize ECT, define its security and privacy guarantees, and give two constructions, one over exponential ElGamal and one over Damg{\aa}rd--Jurik encryption.

\keywords{ephemeral tracing \and coin tracing \and transaction privacy \and digital payment systems}
\end{abstract}
\setcounter{tocdepth}{2}
\makeatletter
\def\l@title#1#2{}
\def\l@author#1#2{}
\makeatother
\clearpage
\tableofcontents
\clearpage
\section{Introduction}
Central bank digital currencies (CBDCs) and digital payment systems are rapidly growing in popularity.
These payment systems allow users to streamline and automate everyday payments, especially with the widespread adoption of secure digital identities.
However, digital payment systems also make financial crimes such as money laundering, terrorist financing, and tax evasion easier to conduct at scale: transactions are fast, cross-border, and can be structured to obscure their origin.
To address this, most regulated systems rely on a traditional third-party intermediary to monitor users and audit trails, even though privacy-preserving alternatives have been cryptographically achievable for decades, from Chaum's ecash~\cite{Chaum82} to fully decentralized constructions such as Zcash~\cite{DBLP:conf/sp/Ben-SassonCG0MTV14}.
This wholesale monitoring exposes a lot of unnecessary (and oftentimes private) user activity to regulatory bodies, who might go on to use this data to malicious ends, for instance, selling user data to data-hungry companies or exposing political dissidents and journalists in countries with weak democratic safeguards.

A regulator tasked with fighting financial crime does not rely on a single mechanism, but on a toolbox of complementary measures to identify suspects, trace the flow of their funds, and prevent further criminal activity through account freezes, transaction limits, or mandatory disclosure.
This work is concerned with one tool in this box: \emph{targeted tracing}.
In a typical investigation, law enforcement first obtains side-channel information implicating a specific user in crimes such as money laundering, for instance a report from a financial intelligence unit, a tip from an informant, or forensic evidence gathered off-ledger.
Acting on this suspicion, the authority embeds a tracing tag into the suspect's funds and follows the tagged funds through the payment system to map out laundering routes and accomplices.
Crucially, tracing a suspect under investigation should not come at the price of deanonymizing law-abiding citizens: the overwhelming majority of users are not under investigation and should retain the full privacy guarantees of the underlying ledger.
This need is not specific to CBDCs or blockchains. In a traditional banking network with interoperable authenticated payment states, existing monitoring and risk-scoring tools can supply the initial suspicion, while an encrypted tag can follow warranted funds across participating banks without exposing unrelated customers' transaction histories. Targeted tracing is therefore a prospective investigative layer for payment networks generally, rather than a cryptocurrency-specific mechanism.

The coin tracing schemes proposed in the literature~\cite{Garman0M16,KiayiasKS22} support such targeted investigations, but grant the authority tracing capabilities that are \emph{unbounded}: once initiated, surveillance either propagates indefinitely through the transaction graph, affecting potentially all users~\cite{Garman0M16}, or continues through all future transactions of a targeted user~\cite{KiayiasKS22}. Since traceability spreads with the flow of funds, an authority that keeps issuing tags accumulates tracing coverage across the user base and may eventually deanonymize the entire ledger. For ordinary users unlucky enough to be tracked, there is \emph{no cryptographic guarantee} (apart from the goodwill of the tracing authority) that tracking remains targeted and temporary, and unconstrained surveillance powers can be misused against political dissidents, journalists, or members of other vulnerable populations.

A common mitigation is to distribute the tracing capability across a committee: in PEReDi~\cite{KiayiasKS22} and UTT~\cite{cryptoeprint:2022/452:utt}, tracing requires a threshold of committee members to cooperate, so that no single party can trace unilaterally. Distributing trust is valuable, but it bounds \emph{who} may trace, not \emph{how much} may be traced: a qualified majority of the committee wields the same unbounded capability as a single authority. The guarantee is thus only as strong as the independence of the committee, and in practice committee seats tend to concentrate in the hands of a few entities, so that the collusion of a couple of them restores unrestricted surveillance. Going further, one could replace the tracing authority altogether by a secure multiparty computation (MPC)~\cite{DBLP:conf/stoc/ChaumCD88} of a functionality that enforces the desired bounds. Trust in the authority is then replaced by the security model of the MPC protocol, but running an interactive multiparty computation on the critical path of every transaction imposes computational and communication overhead that is impractical at ledger scale.

What is missing is a tracing mechanism whose capacity is bounded \emph{by construction}: how many users may be traced simultaneously and how far each trace reaches should be fixed by public parameters, and no authority, however motivated, should be able to silently exceed either bound. In this work, we therefore ask the following question:

\begin{quote}
\emph{Can we design a simple and efficient protocol that bounds the
surveillance capacity of the tracing authority but still allows for targeted tracking of criminal activity?}
\end{quote}

\subsection{Our contribution}

We introduce \emph{ephemeral coin tracing} (ECT), a primitive for targeted tracing with cryptographically enforced expiry. The authority initiates a trace by embedding an encrypted identifier in a suspect's account. After exactly $h$ \emph{hops}, the tag collapses to zero and becomes indistinguishable from an untraced tag, even to the authority. Untargeted accounts carry encryptions of zero, so users learn neither whether they are traced nor how much of a tag's budget remains. The public hop budget $h$ is fixed at setup and cannot be exceeded by the authority.

The ledger policy defines what constitutes a hop. It may advance a tag on every outgoing transfer, once per epoch or block-height interval, or only when a transfer satisfies a condition such as a minimum amount. The primitive therefore bounds tracing in policy-defined steps; the deployment determines how costly those steps are to trigger.

ECT also bounds tracing breadth. Each target occupies one position in a public base-$r$ encoding, whose capacity is limited by the plaintext modulus. Supporting more identifiers requires larger public parameters and hence larger tags. Fresh tags must be issued and recorded on the ledger, while subsequent tags must follow verified degradation and merge transitions. The hop and capacity bounds are therefore public and auditable. A later warrant may start a new trace, but each issued contribution remains subject to its own hop budget.

Transactions may combine tags issued at different times and therefore at different degradation depths. ECT merges them homomorphically while preserving each identifier at its current depth. A positional plaintext encoding separates both identifiers and depths, so an expired contribution vanishes without destroying live contributions. To the best of our knowledge, ECT is the first coin-tracing scheme to handle such merging efficiently.

We develop ECT for anonymous account-based payment systems~\cite{KiayiasKS22,WustKDC22}, where each account carries one encrypted tag. This model bounds tag duplication and requires only an additional encrypted account payload and validation constraints for its transitions.

\paragraph{Making evasion costly.}
A user can clear a possible tag only by triggering enough hops before moving the funds of interest. A policy that counts every transfer would permit cheap ``breadcrumb'' payments, but minimum amounts, rate limits, or counterparty requirements can make exhaustion consume time, meaningful value, or identified counterparties. During that period the trace remains live, and the evasion pattern may itself justify stronger measures or a newly recorded tag. Acting immediately avoids this delay but risks conducting the incriminating transfer while the tag is active.

\paragraph{Implications.}
ECT challenges the common assumption that an opaque ledger cannot be regulated: targeted, hop-bounded tracing coexists with full transaction privacy for every user not under investigation.
Its public parameters make surveillance capacity explicit, allowing legislators and oversight bodies to prescribe and audit bounds that the authority cannot silently widen. The privacy-regulation tradeoff thus becomes partly a matter of cryptographic design rather than institutional restraint alone.

\subsection{Technical approach}\label{sec:technical}
\paragraph{Setup and group structure.}
Our first construction uses exponential ElGamal~\cite{DBLP:journals/tit/Elgamal85} in a cyclic group $\BG=\langle g\rangle$ of order $q^h$, where $q$ is prime and the public parameter $h$ is the hop budget. The prime-power order is essential: multiplication by $q$ in the exponent is non-invertible and erases all information after $h$ applications. More generally, ECT requires an additively homomorphic encryption scheme with a plaintext ring containing a nilpotent element of index $h$. Section~\ref{sec:template} gives this template, and Section~\ref{sec:dj} instantiates it with Damg{\aa}rd--Jurik encryption~\cite{DBLP:conf/pkc/DamgardJ01}, trading larger tags for polynomial-time tracing and security under the standard decisional composite residuosity assumption.

\paragraph{User identifiers and the base-$r$ encoding.}
User $\tid_i$ receives the plaintext identifier $m_i=r^{i-1}$, for a base $r\geq2$ satisfying $r^n<q$. Thus each identifier occupies one base-$r$ position and the complete encoding fits within one base-$q$ digit. To trace $\tid_i$, the authority encrypts $m_i$; a dummy tag encrypts zero.

\paragraph{Degradation via exponentiation.}
Each degradation raises the ciphertext components to the $q$-th power, changing an encryption of $m q^t$ into one of $m q^{t+1}$. The plaintext map $m\mapsto qm\bmod q^h$ is nilpotent: after $h$ applications, $q^hm\equiv0\pmod{q^h}$. Zero remains zero, so expiry is exact and permanent. Semantic security hides whether a tag contains zero or an identifier and also hides its current depth; the user therefore learns neither its tracing status nor the remaining budget.

\paragraph{Merging across identifiers and depths.}
Multiplying ElGamal ciphertexts adds their plaintexts. Within a depth, the base-$r$ representation separates identifiers: the coefficient at position $i-1$ records the multiplicity of $\tid_i$, and $r$ is chosen to prevent carries. Across depths, the base-$q$ representation provides a second layer of separation. A contribution $q^t m_i$ occupies digit $t$, so contributions at different depths coexist without collision. The authority recovers the traced set by reading the base-$q$ digits and then decoding each in base $r$.

\paragraph{Authenticating degradation steps.}
ECT does not authenticate degradation and merging separately. Instead, the host ledger's transaction proof asserts that every output tag is the prescribed degradation and merge of the tags bound to the consumed states. Sharing the account-state witnesses prevents tags from being detached, omitted, duplicated, or substituted. The authority signs fresh tags; thereafter, sound transaction proofs preserve their valid derivation. Rerandomization hides the algebraic link between successive tags, while zero knowledge hides the consumed states and transition witnesses.

\paragraph{Organization.}
Section~\ref{sec:related-work} surveys related work, and Section~\ref{sec:preliminaries} introduces the model and notation. Section~\ref{sec:tracing-scheme} defines ECT and its security properties. Section~\ref{sec:algebraic-construction} presents the nilpotent template, both instantiations, and their security analysis. Section~\ref{sec:applications} discusses deployment choices and applications.

\section{Related work}%
\label{sec:related-work}
Privacy-preserving payment systems and their regulatory extensions have been studied along several lines; we survey the work most closely related to ours.

\paragraph{Ecash and regulatory compliance.}
Building payment systems that are both privacy-preserving and regulatory-compliant has been studied since the origins of ecash, primarily through mechanisms that add auditability at the protocol level.
Chaum~\cite{Chaum82} introduced blind signatures as a foundation for untraceable electronic payments.
Sander and Ta-Shma~\cite{SanderT99} extend this model with auditable anonymous cash, enabling a regulator to verify the total amount in circulation without learning individual transactions.
More recent systems operate in permissioned settings: Androulaki \etal~\cite{AndroulakiCCDET20} design privacy-preserving auditable token payments for permissioned blockchains, and Papadoulis \etal~\cite{cryptoeprint:2024/1181} augment the Quisquis~\cite{DBLP:conf/asiacrypt/FauziMMO19} scheme with auditability, enabling selective disclosure to regulators without compromising user anonymity.

\paragraph{Privacy-preserving collaborative AML.}
Money-laundering routes frequently cross institutional boundaries, but banks cannot ordinarily pool complete customer and transaction records. Van Egmond \etal~\cite{DBLP:conf/fc/EgmondDBRSPV24} address this problem with secure risk propagation: banks jointly propagate account-level risk scores over an interbank transaction network using secure multiparty computation, learning suspicious activity without revealing their underlying records. AMLChain~\cite{DBLP:conf/eisa/HeC21} instead combines distributed identities, zero-knowledge proofs, and an auditable ledger to support privacy-preserving identification of suspicious interbank transactions. Effendi and Chattopadhyay~\cite{DBLP:conf/space/EffendiC24} use fully homomorphic encryption for collaborative graph-based AML inference across institutional data silos. These approaches privately compute a risk score, classification, or compliance predicate over data already held by several institutions. ECT is complementary: once such an analysis, or conventional off-ledger evidence, identifies a target, ECT prospectively follows the associated funds across participating banks while cryptographically bounding the trace's hop distance and breadth. It requires the payment rail to carry and validate tags, but avoids running a joint MPC or encrypted inference over the complete interbank graph on the critical path of each payment.

\paragraph{Accountable decryption.}
Making an authority accountable for its tracing is related to the notion of accountable decryption.
The concept was introduced by Ryan~\cite{DBLP:conf/spw/Ryan17} and later formalized by Li \etal~\cite{DBLP:conf/isw/LiWL0G20,DBLP:journals/iacr/LiLWDWR23}, who give security definitions and practical instantiations using on-chain logs to record each decryption event.
More recent works extend the model to threshold settings, either by requiring collaboration among multiple regulators~\cite{DBLP:journals/tifs/ZhaoluWW24} or by equipping decryptors with self-incriminating proofs that expose rogue behavior~\cite{cryptoeprint:2024/794}.

\paragraph{Privacy budget.}
Rather than adding auditability at the protocol level, privacy budget mechanisms enforce compliance through value thresholds: users may transact anonymously below a set limit, beyond which disclosure to a regulator becomes mandatory.
The budget is highly configurable: it can apply per transaction~\cite{WustKCC19} or per period (e.g., weekly or monthly)~\cite{cryptoeprint:2022/452}, and it may constrain either outgoing or incoming values~\cite{WustKDC22}.
A major limitation is its ``one size fits all'' approach: a regulatory authority can potentially track \emph{all users} whose activity exceeds the threshold indefinitely, exposing honest users with high transaction activity (e.g., currency exchanges) to undue surveillance, while requiring detailed knowledge of user behavior to set an accurate threshold. Furthermore, the authority can exploit monetary inflation to bring an ever-growing fraction of users above the threshold over time, effectively expanding surveillance coverage without any explicit policy change.
Generalizations of this paradigm support broader conditional disclosure triggers beyond simple value thresholds, yet they follow the same pattern: a zero-knowledge proof certifies compliance with a predicate, and disclosure becomes mandatory upon violation.

\paragraph{Coin tracing.}
Coin tracing schemes allow an authority to mark specific coins or users for surveillance and follow the resulting transaction flow.
The idea of endowing an authority with the ability to revoke the anonymity of otherwise untraceable payments goes back to fair e-cash, where one or more trustees can lift anonymity under defined conditions. Stadler \etal~\cite{DBLP:conf/eurocrypt/StadlerPC95} introduce fair blind signatures, Brickell \etal~\cite{DBLP:conf/soda/BrickellGK95} and Camenisch \etal~\cite{DBLP:conf/esorics/CamenischMS96} add trustee-based tracing to anonymous cash, and Frankel \etal~\cite{DBLP:conf/asiacrypt/FrankelTY96} give an efficient off-line construction. These schemes bound \emph{who} may revoke anonymity but, like the more recent designs below, place no bound on \emph{how much} may be traced once revocation is triggered.
Garman \etal~\cite{Garman0M16} propose a coin tracing mechanism in the unspent transaction output (UTXO) model, where individual coins are explicitly marked for traceability. Each coin embeds a tracing key, and when a traced coin is spent the sender attaches an encrypted record from which the authority recovers the tracing keys of the outputs, so a trace follows the coins forward through the transaction graph. The design targets full-graph investigations, and its scope is correspondingly broad: traceability propagates to every descendant coin until a regulated entity clears it, so a trace has no horizon of its own; a user colluding with the authority can widen the trace, since the user produces the inner ciphertext and the authority the outer one; and the tracing payload grows linearly in the number of inputs. Accountability rests on regulated entities, such as exchanges, that reset the tracing status of coins.

Kiayias \etal~\cite{KiayiasKS22} and Tomescu \etal~\cite{cryptoeprint:2022/452:utt} propose a regulated tracing
mechanism with better accountability guarantees based on threshold cryptography. Tracing is initiated by an
audit committee and targets a specific user. Upon registration, users
receive a secret tracing key, which remains hidden. For each
transaction, users produce a pseudorandom tag derived from this key
and a transaction counter. The tag is stored as part of the
transaction metadata. During tracing, the committee reconstructs
these tags incrementally using shares of the tracing secret and
checks them against transaction identifiers stored in a distributed
ledger. This allows the committee to locate all transactions of the
user without revealing the key itself.

\paragraph{Cryptographic watermarking.}
The tag embedding mechanism in ECT is somewhat related to cryptographic watermarking schemes, which embed hidden marks in digital objects that can later be detected by an authorized party, and more broadly to steganography~\cite{DBLP:conf/ih/Cachin98}, which studies the concealment of information within an innocent-looking carrier. The resemblance stops at the properties we require: a watermark is meant to persist, whereas our tags must merge with one another and expire at a fixed budget. Christ \etal~\cite{ChristGZ24} construct undetectable watermarks for language models, but their scheme embeds only a single bit and is single-user, and does not extend to the multi-user setting that merging requires. Of particular note is the work of Cohen \etal~\cite{DBLP:conf/sp/Cohen0S25}, which employs a similar scheme in spirit to watermark large language model outputs and extends to the multi-user setting. However, it does not address merging, which is fundamental to our purposes. Zhao \etal~\cite{DBLP:conf/sp/ZhaoGCFFCG0NTJL25} survey the broader landscape of watermarking for AI-generated content.

Other related cryptographic notions that are known include traitor tracing \cite{DBLP:journals/tit/ChorFNP00,DBLP:conf/eurocrypt/BonehSW06,DBLP:conf/crypto/BonehPR24a,DBLP:conf/eurocrypt/Zhandry25a} and traceable signatures of various types \cite{DBLP:conf/eurocrypt/KiayiasTY04,DBLP:conf/ctrsa/Fujisaki11,DBLP:conf/latincrypt/Ghadafi14}. While these settings share some resemblance regarding the tracing of cryptographic objects, our setting has notable differences. For traitor tracing, the idea is to construct a means to track certain parties that perform some forbidden operation, like decrypting prohibited data. The notions of traceability in signatures are designed around tracing which user actually signed a message in a distributed setting. On the other hand, our setting concerns tracing regular behavior of selected users by default and not around tracing specific aberrant actions by any possible user.

\section{Preliminaries}
\label{sec:preliminaries}

\paragraph{Notation.}
For $n \in \mathbb{N}$, we use $[n]$ to denote the set $\{1, \dots, n\}$.
We write $x \xleftarrow{\$} S$ to denote sampling $x$ uniformly at random from set $S$.
For a function $f: \mathbb{N} \rightarrow \mathbb{R}$, we write $f(\secparam) = \negl(\secparam)$ to denote that $f$ is negligible in $\secparam$, i.e., $f(\secparam) < 1/p(\secparam)$ for all polynomials $p$ and sufficiently large $\secparam$.
We use PPT as a shorthand for probabilistic polynomial time.

\paragraph{Positional encoding of multisets.}
We encode multisets over an index set as integers in a fixed base, so that a multiset can be recovered from the sum of its per-element encodings. This device underlies the merging of tracing tags in our construction, where it is instantiated inside the plaintext ring of the encryption scheme.

\begin{definition}[Base-$r$ encoding]
  \label{def:base-r-encoding}
  Let $n \in \BN$ and let $r \geq 2$ be an integer. Assign index $i \in [n]$ the value $m_i = r^{i-1}$. For a multiset $T$ over $[n]$ with multiplicity function $c : [n] \to \{0, \ldots, r-1\}$, its encoded sum is $v_T = \sum_{i=1}^n c(i)\, r^{i-1}$. Since each coefficient $c(i) < r$, the base-$r$ digits of $v_T$ carry no overflow, so $v_T \leq r^n - 1$ and $v_T$ uniquely determines $T$.
\end{definition}

To apply this encoding to tracing tags, we must ensure that merging never makes a coefficient reach the base $r$, which would create a carry and conflate adjacent identifiers. This follows whenever the payment model bounds how many output states may inherit a contribution from one input state.

\begin{lemma}[Bounded-fan-out multiplicity]
  \label{lem:multiplicity}
  Suppose each transaction copies the contribution of any one input tag into at most $d$ output tags. Then the multiplicity of any identifier at depth $t$ is at most $d^t$. In particular, a tag that remains live only for depths $t<h$ has multiplicity at most $d^{h-1}$ at any fixed depth.
\end{lemma}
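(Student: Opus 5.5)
The plan is induction on the depth $t$, tracking how a single fresh contribution of $\tid_i$ propagates through the transaction DAG. I will model a contribution as a path: a fresh tag issued for $\tid_i$ contributes $m_i$ at depth $0$, and each hop (degradation) moves every contribution from depth $t$ to depth $t+1$. The multiplicity of $\tid_i$ at depth $t$ in a given tag is the number of distinct derivation paths of length $t$ that end in that tag and start at the fresh issuance. Merging adds plaintexts and so only aggregates contributions; it never creates new ones. The claim is therefore about how many copies of the original contribution can exist at depth $t$, summed over all tags and hence a fortiori in any single tag.

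The base case $t=0$ holds because a fresh issuance yields exactly one copy, so the multiplicity is $1 = d^0$. For the inductive step, assume that at most $d^t$ copies of the depth-$t$ contribution exist across all live tags. I will take as given that the ledger's validity proof binds each output tag to the prescribed degradation and merge of the consumed input tags. A depth-$t$ copy can then advance to depth $t+1$ only by being consumed as an input tag of some transaction. By hypothesis, that transaction copies it into at most $d$ output tags. Each depth-$t$ copy therefore spawns at most $d$ depth-$(t+1)$ copies, giving at most $d \cdot d^t = d^{t+1}$. Since merging only sums existing copies, the multiplicity within any single output tag is also at most $d^{t+1}$. The second sentence then follows by monotonicity: live depths satisfy $t\le h-1$, so $d^t \le d^{h-1}$ for $d\ge1$.

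The main obstacle is justifying that a copy at depth $t$ cannot be counted twice or reused. An account state, once consumed, must not be spendable again, and a tag must not be duplicated outside a transaction. This is exactly what the state-binding and nullifier-style uniqueness of the host ledger guarantees, as discussed under ``Authenticating degradation steps''. I will state this as the modeling assumption behind ``each transaction copies\ldots at most $d$''.

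A second subtlety is whether a hop coincides with a transaction. If the policy degrades only on some transfers, I will note that non-degrading transactions still fan out at most $d$ while keeping the depth fixed. That would break the bound, so the lemma should be read with one degradation per fan-out step. I will phrase the induction over degradation steps under this convention, i.e., any copying transaction also advances depth, as in the account model where each outgoing transfer is a hop.
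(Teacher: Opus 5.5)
Your proposal is correct and follows the paper's own proof: induction on depth with multiplicity one at the base and a factor of at most $d$ per depth-advancing transition, giving $d^{t}$, then monotonicity over the live depths $t<h$ for the $d^{h-1}$ bound. The two modeling conditions you make explicit are left implicit in the paper's inductive step, but they are exactly what it uses. These are that each tag is consumed at most once (the paper invokes this separately in its unforgeability proof), and that copying into several outputs happens only on a degrading transition (in the paper's model, receiving passes the recipient's tag through once).
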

\begin{proof}
  At depth $0$, each identifier has multiplicity one. If its multiplicity is at most $d^t$ at depth $t$, and each occurrence is copied into at most $d$ outputs by the next transition, its multiplicity at depth $t+1$ is at most $d\cdot d^t=d^{t+1}$. The first claim follows by induction; maximizing over the live depths $t<h$ gives the second. \qed
\end{proof}

In our two-state account model, $d=2$: the sender's input tag propagates to both the sender's change state and the recipient's new state. Hence, its multiplicity is at most $2^{h-1}$. By contrast, an incoming payment carries the recipient's existing contributions into only one successor state. Thus, only outgoing transfers increase their multiplicity (Section~\ref{sec:tracing-scheme}).

Warrant renewal (Section~\ref{sec:system-model}) introduces a fresh contribution for an identifier while contributions from earlier issuances may remain live. At ingestion, the account tag is degraded before the fresh contribution is merged. The fresh contribution therefore remains at a strictly smaller depth than any earlier contribution carried by the account. Overlapping issuances consequently occupy distinct base-$q$ digits and do not interfere when merged. Since $\trace$ takes the union of the identifiers decoded at all depths, renewal preserves the tracing signal across successive warrants.

\subsection{Cryptographic notions.}
ECT relies on additively homomorphic encryption, signatures, and non-interactive zero-knowledge proofs (NIZKs). We use signatures only through correctness and existential unforgeability under chosen-message attacks (EUF-CMA), and a NIZK for relation $R$ only through completeness, soundness, and zero knowledge. Appendix~\ref{app:cryptodefs} recalls these notions and the underlying encryption security definition. Since the algebraic properties of the encryption scheme drive both merging and degradation, we state them explicitly here.

\paragraph{Homomorphic encryption.}
An \emph{additively homomorphic encryption scheme} over a message group $\mathbb{M}$ consists of PPT algorithms $(\mathsf{KeyGen}, \mathsf{Enc}, \mathsf{Dec}, \mathsf{Rerand})$ satisfying correctness and IND-CPA security. Its ciphertext space is a group under an operation $\cdot$ such that, for $c_i=\mathsf{Enc}(\mathsf{pk},m_i)$,
$\mathsf{Dec}(\mathsf{sk},c_1\cdot c_2)=m_1+m_2$. Moreover, $\mathsf{Rerand}(\mathsf{pk},c)$ returns a fresh encryption of the plaintext in $c$, realizable as $c\cdot\mathsf{Enc}(\mathsf{pk},0)$, whose distribution is computationally indistinguishable from a fresh encryption of the same message. Additive homomorphism implements tag merging, while rerandomization hides the link between a tag and its degraded successor.

\subsection{System model}
\label{sec:system-model}

\paragraph{Ledger and accounts.}
We consider computationally bounded users, a tracing authority holding $\sktrace$, and a permissioned ledger run by at least $3f+1$ validators tolerating $f$ Byzantine faults. Users enroll with a real-world identity and receive a credential; the authority knows the enrolled identities but cannot link them to ledger states except at system boundaries. Following Platypus~\cite{WustKDC22} and PayOff~\cite{BZKWCC24}, an account state commits to a balance, an account secret, a tracing tag, and an epoch. A payment consumes the sender's and recipient's states and creates two new states, proving membership, correct nullifiers, balance conservation, enrollment, and the prescribed tag transition. Each tag is a ciphertext under the authority's key, published with and bound to its state.

\paragraph{Tag lifecycle.}
Time is divided into epochs. In each epoch, the authority publishes a signed \emph{tag board} with one ciphertext per enrolled identity: an encryption of a warrant identifier for an active target and an encryption of zero for everyone else. The board is the only source of fresh tags, and semantic security hides which entries are nonzero. On its first transition in an epoch, an account degrades its tag and merges the board entry bound to its credential. The proof hides which entry is used, and a per-identity, per-epoch nullifier prevents ingesting an entry twice or borrowing another identity's entry. Degrading before merging keeps contributions from different epochs at distinct depths. Dormant accounts skip ingestion and consume no hop budget. A warrant is renewed by placing another nonzero entry in a later epoch, which the target's next transition ingests: the warrant governs how long tracing is authorized, while the cryptography bounds how far each contribution propagates.

\paragraph{Privacy and adversaries.}
We assume the host ledger provides confidentiality, unlinkability, and untraceability through hiding commitments and membership proofs over the full state set~\cite{KiayiasKS22,DBLP:conf/fc/WichtWLC24}. ECT reveals only what tracing requires: the authority can recognize live tags within their hop window, while dummy and expired tags decrypt to zero; identities are recovered only at system boundaries or through subsequent enforcement. Malicious users may submit malformed transitions, accelerate degradation, probe their status, or collude, but accepted transactions satisfy the ledger statement. A malicious authority cannot introduce tags outside the signed board, follow a contribution beyond $h$ hops, or distinguish more than $\lfloor\log_r q\rfloor$ identifier slots without enlarging the public parameters. We exclude out-of-band disclosure of tracing results. The two-state account model bounds tag duplication as required by Lemma~\ref{lem:multiplicity}; Figure~\ref{fig:example} illustrates propagation.
With this model in place, we are ready to define the ephemeral tracing scheme.
\begin{figure}[ht!]
    \centering
    \resizebox{\linewidth}{!}{%
    \begin{tikzpicture}[
      >=Stealth,
      xscale=1.5,
      state/.style={draw, circle, minimum size=0.5cm, inner sep=1pt, font=\bfseries, fill=white},
      taglbl/.style={font=\tiny, align=center},
      edgelbl/.style={font=\tiny\itshape, inner sep=0pt, fill=white},
      traceanno/.style={rotate=0,font=\scriptsize, fill=lightgray, draw=black, rectangle, inner sep=2pt, minimum size=0.0cm},
      authbox/.style={draw, rectangle, rounded corners=2pt, inner sep=4pt, font=\small},
    ]

    \node[authbox] (auth) at (1,2) {Authority};

    \node[state] (A0) at (0,0) {$A$};
    \node[state] (C0) at (0,-1) {$C$};
    \node[state] (D0) at (0,-2) {$D$};

    \node[taglbl, below right=-5pt of A0] {$w_A^{(0)}$};
    \node[taglbl, below right=-5pt of C0] {$w_C^{(0)}$};
    \node[taglbl, below right=-5pt of D0] {$w_D^{(0)}$};

    \draw[-{Latex[length=1mm]}] (auth) to[in=110, out=200] node[edgelbl, near end, left] {} (C0);
    \draw[-{Latex[length=1mm]}] (auth) to[in=110, out=200] node[edgelbl, midway, near end, left] {} (D0);
    \draw[-{Latex[length=1mm]}, font=\small] (auth) to[in=110, out=200] node[edgelbl, near start, left, fill=white] {$\genwatermark$} (A0);

    \node[state] (A1) at (1.5,0) {$A$};
    \node[state] (C1) at (1.5,-1) {$C$};
    \node[state] (D1) at (3.6,-2) {$D$};

    \node[taglbl, below right=-5pt of A1] {$w_A^{(1)}$};
    \node[taglbl, below right=-5pt of C1] {$w_C^{(1)}$};
    \node[taglbl, below right=-5pt of D1] {$w_D^{(1)}$};

    \draw[-{Latex[length=1mm]}] (A0) -- node[edgelbl, above] {$\degrade$} (A1);
    \draw[-{Latex[length=1mm]}] (C0) -- node[edgelbl, above] {$\degrade$} (C1);
    \draw[-{Latex[length=1mm]}] (D0) -- node[edgelbl, above] {$\degrade$} (D1);

    \node[state] (B1) at (3.0,0) {$B$};

    \node[traceanno, above=40pt of B1] (setB1) {$\{\mathsf{id}_A,\mathsf{id}_C\}$};

    \node[state] (B2) at (4.5,0) {$B$};
    \node[state] (Bk) at (6.0,0) {$B$};
    \node[state] (Bk1) at (7.5,0) {$B$};

    \draw[-{Latex[length=1mm]}] (A1) to (B1);
    \draw[-{Latex[length=1mm]}] (C1) to (B1);
    \draw[-{Latex[length=1mm]}] (D1) to[in=270, out=0] (B2);

    \node[traceanno, above=20pt of B2] (setB2) {$\{\mathsf{id}_A,\mathsf{id}_C,\mathsf{id}_D\}$};
    \node[traceanno, above=40pt of Bk] (setBk) {$\{\mathsf{id}_A,\mathsf{id}_C,\mathsf{id}_D\}$};
    \node[traceanno, above=20pt of Bk1] (setBk1) {$\{\mathsf{id}_D\}$};

    \draw[-, dotted, thick] (setB1) -- (B1);
    \draw[-, dotted, thick] (setB2) -- (B2);
    \draw[-, dotted, thick] (setBk) -- (Bk);
    \draw[-, dotted, thick] (setBk1) -- (Bk1);

    \draw[dashed] (6.75,1) -- (6.75,-3.0);
    \node[taglbl, below right=-5pt of B2] {$w_B^{(2)}$};
    \node[taglbl, below right=-5pt of Bk] {$w_B^{(h-1)}$};
    \node[taglbl, below right=-5pt of Bk1] {$w_B^{(h)}$};
    \draw[-{Latex[length=1mm]}] (B1) -- node[edgelbl, above] {$\degrade$} (B2);
    \draw[-{Latex[length=1mm]}] (B2) -- node[edgelbl, above] {$\cdots$} (Bk);
    \draw[-{Latex[length=1mm]}] (Bk) -- node[edgelbl, above] {$\degrade$} (Bk1);

    \node[above=4pt of B1, font=\scriptsize\itshape,fill=white,inner sep=0pt, minimum size=0cm] {$t=1$};
    \node[above=4pt of B2, font=\scriptsize\itshape,fill=white,inner sep=0pt, minimum size=0cm] {$t=2$};
    \node[above=4pt of Bk, font=\scriptsize\itshape,fill=white,inner sep=0pt, minimum size=0cm] {$t=h{-}1$};
    \node[above=4pt of Bk1,font=\scriptsize\itshape,fill=white,inner sep=0pt, minimum size=0cm] {$t=h$};

    \node at (5.5,-2.0) {\scriptsize traceable ($t < h$)};
    \node[align=center] at (8,-2.0) {\scriptsize expired for \\ \scriptsize $A$ and $C$ ($t = h$)};

    \end{tikzpicture}}
    \caption{Tag propagation, merging, and expiry, with $t$ counting hops and $h$ the budget. The authority issues tags with $\genwatermark$, and every outgoing payment applies $\degrade$ to the spent state. At $t=1$, $A$ and $C$ pay $B$, and their degraded tags merge into $B$'s state; from then on $B$ spends once per step, advancing every contribution it carries. $D$ stays dormant until it pays $B$ at $t=2$, so its contribution joins one hop younger and a single tag then carries three identifiers at two depths. The gray boxes give the set returned by $\trace$. The contributions of $A$ and $C$ reach depth $h$ at $t=h$ and decrypt to zero, while that of $D$ remains recoverable: expiry is per contribution and exact at the budget.}
    \label{fig:example}
\end{figure}

\section{Ephemeral tracing scheme}
\label{sec:tracing-scheme}

An ephemeral tracing scheme enables controlled tracking of identifiers by embedding cryptographic tracing tags that degrade across \emph{hops}. What constitutes a hop is not fixed by the primitive, but must be chosen as part of the ledger policy according to the actor being targeted and the desired cost of evasion. A hop may, for example, be triggered by every outgoing transfer, only by a transfer above a minimum amount, once per epoch or block-height interval, or by a transition involving an enrolled counterparty. This choice is consequential: if every transfer consumes a hop unconditionally, a target can cheaply exhaust the tracing budget by splitting funds into small ``breadcrumb'' payments across many accounts. Amount thresholds, rate limits, or counterparty requirements instead force such evasion to consume time, meaningful value, or identified counterparties (see Section~\ref{sec:technical}). Each tag is created with a predefined \emph{hop budget}, the number of policy-defined degradation steps after which it becomes untraceable, even to the authority. When coins from multiple senders flow to a single recipient, their tags are \emph{merged}: the resulting tag encodes all traced identifiers across the inputs, preserving the tracing information of each sender at its current degradation depth.

\subsection{Characterization}
An ephemeral tracing scheme consists of the following five algorithms:
\begin{description}
  \item[$\setup(1^\secparam, n, h, r) \rightarrow (\pp, \sktrace)$]
    Takes as input the security parameter $\secparam$, the number of identifiers $n$, the \emph{maximum} hop budget $h$, and the encoding base $r$.
    Outputs public parameters $\pp$, which fix $h$ publicly, and a tracing secret key $\sktrace$.
    The public parameters \pp are implicit input to the following algorithms.
  \item[$\genwatermark(\sktrace, \tid, b, h') \rightarrow (w, \sigma)$]
    Takes as input the tracing key, target identifier $\tid$, trace flag
    $b \in \{0,1\}$, and a per-tag hop budget $h' \le h$.
    Returns a tag $w$ embedding $\tid$ if $b = 1$ or zero if $b = 0$, together with an authority signature $\sigma$ on $w$. The tag expires after exactly $h'$ degradation hops.

  \item[$\degrade(w) \rightarrow w'$]
    Takes as input a tag $w$.
    Returns the tag $w'$ obtained by applying one degradation step and rerandomizing.
  \item[$\mergewatermarks(w_1, \ldots, w_m) \rightarrow w'$]
    Takes as input a set of tags $w_1, \ldots, w_m$.
    Returns the merged tag $w' = \prod_j w_j$, which contains the union of the traced identifiers across all inputs.
  \item[$\trace(\sktrace, w) \rightarrow S$ or $\bot$]
    Takes as input the tracing key and a tag $w$.
    Outputs a set $S$ of recovered identifiers, or $\bot$ if the coin is untraced (dummy tag) or fully degraded.
\end{description}
Figure~\ref{fig:protocol-overview} illustrates how these algorithms are distributed across the three parties.
\begin{figure}[ht!]
\centering
\resizebox{\linewidth}{!}{%
\begin{tikzpicture}[
  >=Stealth,
  xscale=2.0,
  party/.style={draw, rectangle, minimum width=0pt, minimum height=0.7cm,
                inner sep=4pt, font=\small\bfseries, fill=gray!10},
  alabel/.style={font=\scriptsize\itshape, text=black},
  arrlbl/.style={font=\scriptsize, fill=white, inner sep=1pt},
  msg/.style={-{Latex[length=1.6mm]}, line width=0.9pt},
]

\def\xauth{0}
\def\xuser{2.5}
\def\xval{5}

\node[party] at (\xauth, 0) {Authority};
\node[party] at (\xuser, 0) {User};
\node[party] at (\xval,  0) {Validator};

\draw[dashed, gray] (\xauth, -0.35) -- (\xauth, -6.1);
\draw[dashed, gray] (\xuser, -0.35) -- (\xuser, -6.1);
\draw[dashed, gray] (\xval,  -0.35) -- (\xval,  -6.1);

\node[alabel, right] at (\xauth+0.1, -0.65) {$\setup(1^\secparam)$};
\draw[msg] (\xauth, -1.0) -- node[arrlbl, above]{$\pp$} (\xuser, -1.0);
\draw[msg] (\xauth, -1.3) -- node[arrlbl, above]{$\pp$} (\xval,  -1.3);

\node[alabel, right] at (\xauth+0.1, -1.9) {$\genwatermark(\sktrace, \tid, b, h')$};
\draw[msg] (\xauth, -2.3) -- node[arrlbl, above]{$(w, \sigma)$} (\xuser, -2.3);

\node[alabel, right] at (\xuser+0.1, -3.0) {$\degrade(w)$};
\draw[msg] (\xuser, -3.4) -- node[arrlbl, above]{$(w',\pi)$} (\xval, -3.4);
\node[alabel, left, fill=white, inner sep=1pt] at (\xval+0.75, -3.75)
    {$\mathsf{Verify}(w',\pi) \to \{0,1\}$};

\node[alabel, right] at (\xuser+0.1, -4.3) {$\mergewatermarks(w_1, \ldots, w_m)$};
\draw[msg] (\xuser, -4.7) -- node[arrlbl, above]{$w'$} (\xval, -4.7);

\draw[msg] (\xval, -5.5) -- node[arrlbl, above]{$w'$ (from ledger)} (\xauth, -5.5);
\node[alabel, right] at (\xauth+0.1, -5.9) {$\trace(\sktrace, w') \to S$};

\end{tikzpicture}}
\caption{Protocol overview: interactions between the authority, users, and validators.
The authority runs \setup and \genwatermark (producing $(w, \sigma)$) and \trace (requiring $\sktrace$).
Users run \degrade and \mergewatermarks as part of constructing a payment.
Validators run the host ledger's $\mathsf{Verify}$ algorithm on the transaction-validity proof, whose statement includes the tag transition.}
\label{fig:protocol-overview}
\end{figure}
Under the default hop policy, spending a state advances the spender's tag by one hop, while receiving leaves the recipient's contributions at their current depth. Thus, for input tags $w_S,w_R$, the user computes
\[
  \hat w_S \leftarrow \degrade(w_S)
\]
and assigns $w'_S=\hat w_S$ to the sender's change and $w'_R=\mergewatermarks(\hat w_S,w_R)$ to the recipient. Because $\hat w_S$ carries fresh randomness, the merged tag is itself rerandomized, so $w'_R$ is unlinkable to $w_R$ without a separate rerandomization step. Charging hops to the spender ties the budget to movements of the traced funds: third parties cannot consume it by sending payments to the target. The once-per-epoch ingestion transition is the single exception; it degrades the account tag before merging the due board entry, keeping contributions from successive epochs at distinct depths (Section~\ref{sec:system-model}). The ledger statement enforces these transitions and binds them to the hidden account states. This fan-out gives the bound of Lemma~\ref{lem:multiplicity}.

\subsection{Security}
\label{subsec:security-defs}
We now state two correctness conditions and four computational security requirements. These definitions assume that the host transaction proof enforces the composition above with completeness, soundness, and zero knowledge.

Trace correctness captures the basic functionality of the scheme: a tag issued for an identifier and degraded within its hop budget must remain recoverable by the authority.
\begin{definition}[Trace correctness]
  \label{def:trace-correctness}
  An ECT scheme satisfies \emph{trace correctness} if for every identifier $\tid$, every per-tag budget $h' \le h$, and every $k < h'$:
  \[
    \Pr\left[
      \begin{array}{l}
        (\pp, \sktrace) \leftarrow \setup(1^\secparam); \\
        (w^{(0)}, \sigma) \leftarrow \genwatermark(\sktrace, \tid, 1, h'); \\
        w^{(k)} \leftarrow \degrade^k(w^{(0)}); \\
        \tid \in \trace(\sktrace, w^{(k)})
      \end{array}
    \right] \geq 1 - \negl(\secparam),
  \]
  where $\degrade^k(w^{(0)})$ denotes $k$ sequential applications of the $\degrade$ operation.
\end{definition}
Merge correctness extends trace correctness to the multi-coin setting: after merging, the authority must recover exactly the identifiers whose contributions are still within budget. Expired and dummy contributions must act as neutral elements of merging, vanishing without disturbing live ones.
\begin{definition}[Merge correctness]
  \label{def:merge-correctness}
  An ECT scheme satisfies \emph{merge correctness} if for every $m \in \BN$, identifiers $\tid_1, \ldots, \tid_m$, flags $b_1, \ldots, b_m \in \{0,1\}$, per-tag budgets $h'_1, \ldots, h'_m \le h$, and degradation counts $t_1, \ldots, t_m \geq 0$, writing $L = \{i : b_i = 1 \wedge t_i < h'_i\}$ for the live inputs and $S_L = \{\tid_i : i \in L\}$ if $L \neq \emptyset$ and $S_L = \bot$ otherwise:
  \[
    \Pr\left[
      \begin{array}{l}
        (\pp, \sktrace) \leftarrow \setup(1^\secparam); \\
        (w_i^{(0)}, \sigma_i) \leftarrow \genwatermark(\sktrace, \tid_i, b_i, h'_i), \; i \in [m]; \\
        w_i \leftarrow \degrade^{t_i}(w_i^{(0)}), \; i \in [m]; \\
        w' \leftarrow \mergewatermarks(w_1, \ldots, w_m); \\
        \trace(\sktrace, w') = S_L
      \end{array}
    \right] \geq 1 - \negl(\secparam),
  \]
  where $\degrade^{t_i}(w_i^{(0)})$ denotes $t_i$ sequential applications of $\degrade$.
\end{definition}
Indistinguishability ensures that a user cannot determine whether their coin carries a traceable or a dummy tag, even given arbitrarily many other tags of its choice.
\begin{definition}[Indistinguishability]
  \label{def:indist}
  An ECT scheme satisfies \emph{indistinguishability} if for every PPT adversary $\CA$:
  \[
    \Pr\left[
      \begin{array}{l}
        (\pp, \sktrace) \leftarrow \setup(1^\secparam); \\
        (\tid, h') \leftarrow \CA^{\genwatermark(\sktrace,\,\cdot,\,\cdot,\,\cdot)}(\pp); \\
        b \xleftarrow{\$} \{0,1\}; \\
        (w, \sigma) \leftarrow \genwatermark(\sktrace, \tid, b, h'); \\
        b' \leftarrow \CA^{\genwatermark(\sktrace,\,\cdot,\,\cdot,\,\cdot)}(w, \sigma); \\
        b' = b
      \end{array}
    \right] \leq \frac{1}{2} + \negl(\secparam).
  \]
\end{definition}
Ephemeral traceability is the central guarantee of the scheme: once a tag has been degraded beyond its hop budget, the authority irrevocably loses the ability to recover the traced identifier, even in possession of $\sktrace$.
\begin{definition}[Ephemeral traceability]
  \label{def:ephemeral}
  An ECT scheme satisfies \emph{ephemeral traceability} if for every identifier $\tid$, every per-tag budget $h' \le h$, and every $k \geq h'$:
  \[
    \Pr\left[
      \begin{array}{l}
        (\pp, \sktrace) \leftarrow \setup(1^\secparam); \\
        (w^{(0)}, \sigma) \leftarrow \genwatermark(\sktrace, \tid, 1, h'); \\
        w^{(k)} \leftarrow \degrade^k(w^{(0)}); \\
        \trace(\sktrace, w^{(k)}) \neq \bot
      \end{array}
    \right] \leq \negl(\secparam),
  \]
  where $\degrade^k(w^{(0)})$ again denotes $k$ sequential applications of $\degrade$.
\end{definition}
Unforgeability rules out framing: an adversary cannot derive a tag that traces to an identifier not tagged by the authority. This condition is necessary because public encryption allows anyone to encrypt arbitrary identifiers. A \emph{transition history} for $w^*$ records a derivation in which each tag is either \emph{issued}, with a valid signature under $\mathsf{pk}_{\text{sig}}$, or produced by a prescribed transition. Under the default policy, these are the payment transition $(w_S,w_R)\mapsto(\hat w_S,\mergewatermarks(\hat w_S,w_R))$, where $\hat w_S=\degrade(w_S)$, and the ingestion transition $(w,w_e)\mapsto\mergewatermarks(\degrade(w),w_e)$ for an issued tag $w_e$. Each tag may be consumed at most once, and the final step outputs $w^*$. The algorithm $\mathsf{Verify}(\pp, w^*, H)$ checks a claimed history given the step randomness: it verifies the signature on every issued tag, recomputes every transition, and confirms that no tag is consumed twice. Ledger-proof soundness and nullifiers ensure that every accepted tag admits a history that $\mathsf{Verify}$ accepts, built from entries on signed tag boards; the game below therefore captures the power of a ledger adversary.
\begin{definition}[Unforgeability]
  \label{def:unforge}
  An ECT scheme satisfies \emph{unforgeability} if for every PPT adversary $\CA$, letting $\CQ$ denote the set of identifiers queried to the $\genwatermark$ oracle with flag $b = 1$:
  \[
    \Pr\left[
      \begin{array}{l}
        (\pp, \sktrace) \leftarrow \setup(1^\secparam); \\
        (w^*, H) \leftarrow \CA^{\genwatermark(\sktrace,\,\cdot,\,\cdot,\,\cdot)}(\pp); \\
        \mathsf{Verify}(\pp, w^*, H) = 1 \;\wedge{} \\
        \trace(\sktrace, w^*) \neq \bot \;\wedge{} \\
        \qquad \exists\, \tid \in \trace(\sktrace, w^*) :\; \tid \notin \CQ
      \end{array}
    \right] \leq \negl(\secparam).
  \]
\end{definition}
Unlinkability prevents any party from correlating a tag before and after a degradation step. The game below isolates the protection supplied by rerandomization; in a deployment the host ledger's zero-knowledge transaction proof additionally hides which committed state was consumed.
\begin{definition}[Unlinkability]
  \label{def:unlink}
  An ECT scheme satisfies \emph{unlinkability} if for every PPT adversary $\CA$:
  \[
    \Pr\left[
      \begin{array}{l}
        (\pp, \sktrace) \leftarrow \setup(1^\secparam); \\
        (w_0, w_1) \leftarrow \CA^{\genwatermark(\sktrace,\,\cdot,\,\cdot,\,\cdot)}(\pp); \\
        b \xleftarrow{\$} \{0,1\}; \\
        w' \leftarrow \degrade(w_b); \\
        b' \leftarrow \CA(w_0, w_1, w'); \\
        b' = b
      \end{array}
    \right] \leq \frac{1}{2} + \negl(\secparam).
  \]
\end{definition}
Having defined the interface and the security properties of ECT, we now give two constructions that satisfy all six when embedded in a ledger proof satisfying the requirement above.
\section{Constructions}
\label{sec:algebraic-construction}
Most of the security properties from Section~\ref{subsec:security-defs} can be instantiated with standard building blocks. Unforgeability is anchored by a signature scheme at issuance and propagated by soundness of the host ledger's transaction proof. Indistinguishability follows from IND-CPA security of the encryption scheme: traced and dummy tags are encryptions of different plaintexts under the same key. Unlinkability of the algebraic operation follows from ciphertext rerandomization; the host ledger's zero-knowledge proof additionally hides the consumed account states.

The difficult property is \emph{ephemeral traceability}: after exactly $h$ steps, every tag must collapse to the same value as a dummy tag, yet an expired tag must not erase live tracing information when the two are merged. We first examine two natural approaches that each satisfy only one side of this requirement.

\subsection{Approaches from first principles}\label{sec:strawman}

One approach is \emph{noisy rerandomization}. Each degradation step adds bounded noise $\delta \in [-\Delta,\Delta]$ to the encoded identifier. Once enough noise has accumulated, decoding fails and $\trace$ returns $\bot$ with overwhelming probability. Decoding failure, however, is not the same as expiry. Even after $h$ hops, the decryption of a traced tag remains concentrated around $m_i\cdot M$, whereas a dummy tag decrypts to zero. The authority can therefore still distinguish expired traced tags from dummy tags, and the tracing window does not close at a sharp threshold.

A second approach makes an expired tag look uniformly random. For example, pseudorandom-code constructions add noise from the ambient space until the codeword becomes computationally indistinguishable from uniform and its identifier can no longer be decoded, as in the construction of Cohen \etal~\cite{DBLP:conf/sp/Cohen0S25}. This gives a stronger notion of expiry, but it is incompatible with merging. A uniformly random expired tag is not a neutral element: merging it with a live tag randomizes the result and destroys the live identifier as well.

These approaches expose the two requirements that degradation must satisfy simultaneously. Expired tags must become indistinguishable from dummy tags, and they must behave neutrally when merged with tags that remain live. The nilpotent construction below achieves both by making every expired contribution equal to zero.

\subsection{Nilpotent degradation}
\label{sec:template}

The key conceptual advance in our work is to replace a probabilistic notion of degradation (as in the previous approaches) with an algebraic one, which resolves both sticking points at once.

The mechanism is generic and does not depend on a particular encryption scheme. It requires an additively homomorphic encryption scheme whose plaintext ring $R$ contains an element $\nu$ that is \emph{nilpotent of index $h$}, that is, $\nu^{h} = 0$ but $\nu^{h-1} \neq 0$, and for which multiplication of a plaintext by $\nu$ is realizable as a public operation on ciphertexts (concretely, raising a ciphertext to a fixed public exponent). A tag is an encryption of an identifier, degradation multiplies the plaintext by $\nu$ and then rerandomizes for unlinkability, and merge is homomorphic addition of plaintexts. Two properties follow. First, degradation is nilpotent: the map $m \mapsto \nu m$ drives every plaintext to $0$ after exactly $h$ applications and keeps it there, since $0$ is a fixed point, so a tag expires at a sharp, budget-defined hop and never revives. Second, merge commutes with degradation: both are homomorphic, so combining tags adds their plaintexts irrespective of degradation depth, and a positional encoding (Definition~\ref{def:base-r-encoding}) keeps the contributions separable; the collapse to $0$ occurs only once the full budget is consumed, so merging never destroys a live tag. Both properties hold irrespective of who applies the map. Degradation needs no key and no interaction with the authority, so the party that advances a tag is a deployment choice, which we discuss in Section~\ref{sec:applications}.

We instantiate this template twice. Over exponential ElGamal (Section~\ref{sec:elgamal}) the plaintext ring is $\BZ_{q^h}$, the nilpotent element is the prime $q$ (so that $q^h \equiv 0$ in $\BZ_{q^h}$), and degradation is the public map $c \mapsto c^q$; this instantiation gives the most compact tags. Over Damg{\aa}rd--Jurik encryption~\cite{DBLP:conf/pkc/DamgardJ01} (Section~\ref{sec:dj}) the plaintext ring is $\BZ_{N^s}$, the nilpotent element is the modulus $N$, of index $s$, and degradation is $c \mapsto c^N$; setting $s = h$ reproduces the same decay, with the practical advantage that decryption returns the plaintext integer directly, so tracing is polynomial rather than a bounded discrete-logarithm search, and with semantic security resting on DCR rather than on DDH in a prime-power-order group.

\subsection{Exponential ElGamal instantiation}
\label{sec:elgamal}

Exponential ElGamal gives compact tags and makes tracing expensive as the live traced set grows. It operates over a cyclic group $\BG$ of order $q^h$, for prime $q$, rather than the textbook prime-order group; the resulting small plaintext space publicly bounds identifier capacity, but requires a non-standard DDH assumption and fresh parameters for each hop budget.

\begin{definition}[Prime-power-order group]
  A \emph{cyclic group of prime-power order} $q^h$ is a group $\BG = \langle g \rangle$ of order $q^h$, where $g^{q^h} = 1$ and $g^k \neq 1$ for all $0 < k < q^h$.
\end{definition}
The structural property we use is that exponentiation by $q$ is \emph{nilpotent}: for any $x \in \BG$, applying $x \mapsto x^q$ exactly $h$ times yields the identity, since $x^{q^h} = 1$ for all $x \in \BG$. This supplies the nilpotent element $\nu = q$ of the template of Section~\ref{sec:template}.
\begin{definition}[DDH in prime-power-order groups]
  \label{def:ddh-prime-power}
  Let $\BG$ be a cyclic group of order $q^h$ with generator $g$. The \emph{Decisional Diffie-Hellman (DDH) assumption} holds in $\BG$ if for all PPT distinguishers $\CD$:
  \[
    \left|\Pr[\CD(g,\, g^a,\, g^b,\, g^{ab}) = 1] - \Pr[\CD(g,\, g^a,\, g^b,\, g^c) = 1]\right| \leq \negl(\secparam)
  \]
  where $a, b, c \xleftarrow{\$} \BZ_{q^h}$.
\end{definition}
This assumption is non-standard. Projection by $x\mapsto x^{q^{h-1}}$ maps $\BG$ onto its order-$q$ subgroup, so prime-order DDH hardness is necessary but does not imply DDH in $\BG$; the remaining subgroup chain may expose additional structure. We therefore assume DDH directly in $\BG$. Concretely, $\BG$ may be the order-$q^h$ subgroup of $\BZ_p^*$ for prime $p$ with $q^h\mid p-1$. Pohlig--Hellman~\cite{PH78} costs $O(h\sqrt q)$, making $q\ge 2^{2\lambda}$ necessary, though not sufficient, for $\lambda$-bit security. Section~\ref{sec:dj} avoids this assumption using DCR.

$\setup$ (Alg.~\ref{alg:eg-setup}) generates two key pairs under $\pp$: an ElGamal key pair $(\mathsf{pk}_{\text{enc}}, \mathsf{sk}_{\text{enc}})$ for tag encryption along with a signature key pair $(\mathsf{pk}_{\text{sig}}, \mathsf{sk}_{\text{sig}})$ for authenticating freshly issued tags.

\begin{algorithm}[ht!]
\caption{Setup, exponential ElGamal.}
\label{alg:eg-setup}
\Alg{$\setup(1^\secparam, n, h, r) \rightarrow (\pp, \sktrace)$}{
  Let $r$ be an integer with $r > 2^{h-1}$ \tcp*{encoding base, Lemma~\ref{lem:multiplicity}}
  Choose prime $q$ with $q > r^n$ and $q \geq 2^{2\secparam}$, $q \neq r$\;
  Construct cyclic group $\BG$ of order $q^h$ with generator $g$\;
  $(\mathsf{pk}_{\text{enc}}, \mathsf{sk}_{\text{enc}}) \leftarrow \mathsf{KeyGen}(\BG)$\;
  $(\mathsf{pk}_{\text{sig}}, \mathsf{sk}_{\text{sig}}) \leftarrow \mathsf{SigKeyGen}(1^\secparam)$\;
  Assign $m_i \gets r^{i-1}$ for each user $\tid_i$, $i \in [n]$\;
  $\pp \gets (\BG, q, h, n, r, \mathsf{pk}_{\text{enc}}, \mathsf{pk}_{\text{sig}})$, \quad $\sktrace \gets (\mathsf{sk}_{\text{enc}}, \mathsf{sk}_{\text{sig}})$\;
  \KwRet{$(\pp, \sktrace)$}\;
}
\end{algorithm}

The choices made by $\setup$ also make both tracing bounds auditable. When $\BG$ is realized as a subgroup of $\BZ_p^*$, the checks $q^h\mid p-1$, $g^{q^h}=1$, and $g^{q^{h-1}}\ne1$ establish that $g$ has order $q^h$, fixing the hop budget. From the same public parameters, anyone computes the identifier capacity $n_{\max}=\lfloor\log_r q\rfloor$, because larger identifiers cross a base-$q$ digit. Increasing either bound requires a larger public group and hence larger tags; assigning several identities to one slot only makes them indistinguishable under $\trace$.

$\genwatermark$ (Alg.~\ref{alg:eg-tag}) realizes budget $h'\le h$ by issuing at depth $h-h'$; $b=0$ encrypts zero. The authority signs the ciphertext to anchor its ledger history.
 
\begin{algorithm}[ht!]
\caption{Tag issuance, exponential ElGamal.}
\label{alg:eg-tag}
\Alg{$\genwatermark(\sktrace, \tid_i, b, h') \rightarrow (w, \sigma)$}{
  $m \gets q^{h-h'} \cdot r^{i-1} \cdot b$ \tcp*{$h' \le h$}
  $w \leftarrow \mathsf{Enc}(\mathsf{pk}_{\text{enc}}, m)$\;
  $\sigma \leftarrow \mathsf{Sign}(\mathsf{sk}_{\text{sig}}, w)$\;
  \KwRet{$(w, \sigma)$}\;
}
\end{algorithm}
 
$\degrade$ (Alg.~\ref{alg:eg-degrade}) raises both components to $q$ and rerandomizes, implementing $m\mapsto qm\bmod q^h$.
 
\begin{algorithm}[ht!]
\caption{Degradation, exponential ElGamal.}
\label{alg:eg-degrade}
\Alg{$\degrade(w) \rightarrow w'$}{
  Parse $w = (A, B)$\;
  $r' \xleftarrow{\$} \BZ_{q^h}$\;
  $w' \gets \bigl(A^q \cdot g^{r'},\, B^q \cdot \mathsf{pk}_{\text{enc}}^{r'}\bigr)$\;
  \KwRet{$w'$}\;
}
\end{algorithm}
 
$\mergewatermarks$ (Alg.~\ref{alg:eg-merge}) multiplies ciphertexts component-wise, thereby adding their plaintexts. Its correctness is enforced by the transaction statement.
 
\begin{algorithm}[ht!]
\caption{Merging, exponential ElGamal.}
\label{alg:eg-merge}
\Alg{$\mergewatermarks(w_1, \ldots, w_m) \rightarrow w'$}{
  $w' \gets \prod_{j=1}^m w_j$\;
  \KwRet{$w'$}\;
}
\end{algorithm}

$\trace$ (Alg.~\ref{alg:eg-trace}) recovers the base-$q$ digits of $v \in \BZ_{q^h}$ from the decrypted group element $g^v$: base $q$ separates degradation depths, while base $r$ separates identifiers within each depth. Let $I \subseteq [n]$ be the set of issued identifier slots, let $K=|I|$, and split $I=I_0\sqcup I_1$ into two halves. For $b\in\{0,1\}$, let $D_b$ contain the sums $\sum_{i\in I_b}c_i r^{i-1}$ with $0\leq c_i\leq 2^{h-1}$, and let $D=D_0+D_1\subseteq\BZ_q$. The coefficient bound follows from Lemma~\ref{lem:multiplicity}, and $|D|\leq(2^{h-1}+1)^K\leq r^K$ because $r>2^{h-1}$.

The digits are recovered one at a time as in Pohlig--Hellman~\cite{PH78}. Let $\gamma=g^{q^{h-1}}$ generate the order-$q$ subgroup and let $y_t=g^{\sum_{j\geq t}d_jq^j}$. Then $y_t^{q^{h-1-t}}=\gamma^{d_t}$ reveals the next digit, after which $y_{t+1}=y_t\cdot g^{-d_tq^t}$ removes it. Since $d_t$ lies in the structured set $D$, rather than all of $\BZ_q$, it can be found with a meet-in-the-middle search. Concretely, write $d_t=e_0+e_1$ with $e_b\in D_b$, precompute a table of $\gamma^{e_0}$ for all choices from the first half, and search the second half for a matching value $y_t^{q^{h-1-t}}\gamma^{-e_1}$. Splitting the search in this way reduces its cost from enumerating roughly $|D|$ candidates to enumerating roughly $|D|^{1/2}$ candidates on each side. Overall, tracing costs $O(h\cdot r^{K/2})$ group operations and $O(r^{K/2})$ memory: it is polynomial in $h$ but exponential in $K$, so a polynomial-time authority requires $K=O(\log\secparam)$. Section~\ref{sec:dj} removes this restriction. On a tag outside the set of valid derivations, the digit search may find no element of $D$; $\trace$ then returns $\bot$.

\begin{algorithm}[ht!]
\caption{Tracing, exponential ElGamal.}
\label{alg:eg-trace}
\Alg{$\trace(\sktrace, w) \rightarrow S$ or $\bot$}{
  Parse $w = (A, B)$ and set $y_0 \gets B / A^{\mathsf{sk}_{\text{enc}}}$ \tcp*{$y_0 = g^{v}$}
  \lIf{$y_0 = 1$}{\KwRet{$\bot$}}
  $\gamma \gets g^{q^{h-1}}$\;
  \For{$t = 0, \ldots, h-1$}{
    \lIf{no $d \in D$ satisfies $\gamma^{d} = y_t^{\,q^{h-1-t}}$}{\KwRet{$\bot$}}
    $d_t \gets$ the $d \in D$ with $\gamma^{d} = y_t^{\,q^{h-1-t}}$\;
    $y_{t+1} \gets y_t \cdot g^{-d_t q^{t}}$\;
  }
  Write $d_t = \sum_{i \in I} c_{t,i}\, r^{i-1}$ for each $t$\;
  \KwRet{$S \gets \{\tid_i : c_{t,i} > 0 \text{ for some } t\}$}\;
}
\end{algorithm}

It remains to establish the six properties of Section~\ref{subsec:security-defs} for this instantiation. The algebraic construction relies on DDH in $\BG$ (Definition~\ref{def:ddh-prime-power}) and EUF-CMA security of the issuance signature. Its ledger deployment additionally assumes completeness, soundness, and zero knowledge of the host transaction proof for the extended statement described above.

\begin{theorem}[Trace correctness]
  \label{thm:trace-correctness}
  For any $\tid_i \in [n]$, any per-tag budget $h' \le h$, and any $t < h'$, let $w \leftarrow \genwatermark(\sktrace, \tid_i, 1, h')$ and $w'$ be obtained by applying $\degrade$ exactly $t$ times. Then $\tid_i \in \trace(\sktrace, w')$ with probability $1 - \negl(\secparam)$.
\end{theorem}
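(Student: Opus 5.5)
The plan is to track the plaintext through issuance and degradation, then check that Alg.~\ref{alg:eg-trace} recovers it digit by digit. By Alg.~\ref{alg:eg-tag}, $w$ is an exponential ElGamal encryption of $m = q^{h-h'} r^{i-1}$. Each call to Alg.~\ref{alg:eg-degrade} raises both components to the $q$-th power and multiplies by a fresh encryption of zero. So by induction on the number of degradations, $w'$ is a valid ElGamal ciphertext of
\[
  v \equiv q^{\,h-h'+t}\, r^{i-1} \pmod{q^h},
\]
with uniform randomness. Concretely, $(A,B)=(g^{\rho},\mathsf{pk}_{\text{enc}}^{\rho} g^{m})$ maps to $(g^{q\rho+r'},\mathsf{pk}_{\text{enc}}^{q\rho+r'} g^{qm})$. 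Hence $y_0 = B/A^{\mathsf{sk}_{\text{enc}}} = g^{v}$ exactly; correctness here is deterministic.

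Next I will show $v \neq 0$ in $\BZ_{q^h}$ and identify its base-$q$ expansion. Since $t<h'$, the exponent $u := h-h'+t$ satisfies $0\le u\le h-1$. Since $1\le r^{i-1}\le r^{n-1}<q$, the value $r^{i-1}$ is a nonzero element of $\{0,\dots,q-1\}$. Therefore $v = r^{i-1} q^{u}$ is already reduced modulo $q^h$. Its base-$q$ digits are $d_u = r^{i-1}$ and $d_j=0$ for $j\ne u$. In particular $y_0\ne 1$, because $g$ has order exactly $q^h$, so $\trace$ does not abort at the first check.

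I will then verify the loop invariant $y_t = g^{\sum_{j\ge t} d_j q^j}$. Raising to $q^{h-1-t}$ kills all terms with $j>t$ (their exponents become multiples of $q^h$), leaving $\gamma^{d_t}$. Each true digit lies in $D$: $0$ is obtained with all $c=0$, and $r^{i-1}$ with $c_i=1\le 2^{h-1}$, provided $i\in I$ (the slot of an issued identifier, which holds because $\tid_i$ was tagged). Uniqueness of the matching element is the step needing care. All elements of $D$ are at most $\sum_i 2^{h-1} r^{i-1} < r^n < q$, using $r>2^{h-1}$ (so $2^{h-1}\le r-1$). Hence distinct elements of $D$ are distinct residues modulo $q$, and $\gamma$ has order $q$, so $d\mapsto\gamma^d$ is injective on $D$. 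The search therefore returns exactly $d_t$ and never returns $\bot$.

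Finally, the same carry-free bound makes the base-$r$ decomposition of $d_u=r^{i-1}$ unique (Definition~\ref{def:base-r-encoding}), giving $c_{u,i}=1>0$. So $\tid_i\in S$ with probability $1$, in particular $1-\negl(\secparam)$. The main obstacle is the injectivity and membership of the digit search. It relies on the setup conditions $q>r^n$ and $r>2^{h-1}$ and on $\tid_i$ belonging to the issued slot set $I$; I would state these explicitly rather than leave them implicit.
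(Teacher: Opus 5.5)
Your proposal is correct and takes the same route as the paper. You track the plaintext to $q^{h-h'+t} r^{i-1}$, observe that it is a nonzero, already-reduced value whose only nonzero base-$q$ digit is $r^{i-1}$ at position $h-h'+t$, and decode that digit in base $r$. The paper leaves the mechanics of Alg.~\ref{alg:eg-trace} implicit, whereas you also check that the Pohlig--Hellman digit search succeeds: each true digit lies in $D$ given $i\in I$, and $d\mapsto\gamma^d$ is injective on $D$ because every element of $D$ is below $q$. This is a useful tightening, not a different argument.
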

\begin{proof}
  By the ElGamal homomorphism, raising each ciphertext component to the $q$-th power multiplies the plaintext by $q$ in $\BZ_{q^h}$. The tag $\genwatermark(\sktrace, \tid_i, 1, h')$ encrypts $q^{\,h-h'} r^{i-1}$ (issuance depth $h-h'$); after $t$ applications of $\degrade$ it encrypts $q^{\,h-h'+t} \cdot r^{i-1} \bmod q^h$. Since $r^{i-1} < q$ (by $q > r^n$) and $h-h'+t < h$ for $t < h'$, this value is nonzero. Writing it in base $q$, the digit at position $h-h'+t$ is $r^{i-1}$ and all others are zero; decomposing $r^{i-1}$ in base $r$ gives a unit coefficient at position $i-1$, so $\trace$ returns $\tid_i$. \qed
\end{proof}

\begin{theorem}[Merge correctness]
  \label{thm:merge-correctness}
  For $i \in [m]$, let $w_i \leftarrow \genwatermark(\sktrace, \tid_i, b_i, h'_i)$, and suppose each $w_i$ has been degraded $t_i \geq 0$ times before merging; write $\tau_i = (h - h'_i) + t_i$ for its current depth and let $L = \{i : b_i = 1 \wedge \tau_i < h\}$ be the live inputs. Let $w' \leftarrow \mergewatermarks(w_1, \ldots, w_m)$. Then the merged ciphertext encrypts $\sum_{i \in L} q^{\tau_i} \cdot m_i$, and $\Pr[\trace(\sktrace, w') = S_L] \ge 1 - \negl(\secparam)$, with $S_L$ as in Definition~\ref{def:merge-correctness}.
\end{theorem}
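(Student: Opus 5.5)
I will follow the proof of Theorem~\ref{thm:trace-correctness} and track plaintexts through the homomorphism. Then I will argue that the digit-by-digit decoding in Algorithm~\ref{alg:eg-trace} recovers exactly $S_L$.

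\emph{Step 1: the plaintext of each input.} By Algorithm~\ref{alg:eg-tag}, $w_i$ encrypts $b_i q^{h-h'_i} m_i$. Each $\degrade$ raises both components to the $q$-th power and multiplies by an encryption of zero, so by the ElGamal homomorphism it multiplies the plaintext by $q$ in $\BZ_{q^h}$. After $t_i$ steps, $w_i$ therefore encrypts $b_i q^{\tau_i} m_i \bmod q^h$. This value is $0$ whenever $b_i=0$ or $\tau_i\ge h$, since $q^h\equiv 0$ and $0$ is a fixed point. Algorithm~\ref{alg:eg-merge} multiplies ciphertexts componentwise, which adds plaintexts. Hence $w'$ encrypts $v=\sum_{i\in L} q^{\tau_i} m_i \bmod q^h$, which is the first claim. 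It holds with probability one, as a valid (rerandomized) ElGamal encryption.

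\emph{Step 2: $v$ has a carry-free base-$q$ expansion.} I group the live inputs by depth: for $t<h$, let $d_t=\sum_{i\in L,\ \tau_i=t} m_i=\sum_{j} c_{t,j} r^{j-1}$, where $c_{t,j}$ counts the live inputs at depth $t$ carrying $\tid_j$. The key step is showing $c_{t,j}<r$ and $d_t<q$, so that $v=\sum_t d_t q^t$ is the genuine base-$q$ expansion and each $d_t$ is the genuine base-$r$ expansion of the multiplicities. Here I invoke the multiplicity bound $c_{t,j}\le 2^{h-1}<r$ from Lemma~\ref{lem:multiplicity} as it is used in setup. This applies when the inputs arise from ledger-valid derivations with fan-out at most two, which is precisely the regime in which the set $D$ is defined. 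Then $d_t\le \sum_{j\le n}(r-1)r^{j-1}=r^n-1<q$ by Definition~\ref{def:base-r-encoding} and $q>r^n$. The no-overflow condition also shows that no reduction modulo $q^h$ occurs. I expect this to be the main obstacle. The definition quantifies over arbitrary $m$ and arbitrary duplicated identifiers, so the statement must be read under the paper's standing multiplicity bound; if several inputs share an identifier at one depth, the argument needs exactly this bound. Should the bound fail, I would restrict the claim to derivations satisfying Lemma~\ref{lem:multiplicity}.

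\emph{Step 3: decoding.} I show by induction on $t$ that $y_t=g^{\sum_{j\ge t} d_j q^j}$. Given this, $y_t^{q^{h-1-t}}=\gamma^{d_t}$, because all higher digits are annihilated by the factor $q^h$. Each $d_t$ lies in $D$, since its coefficients are in $\{0,\dots,2^{h-1}\}$ on the issued slots $I$. The element $d_t$ is also the unique match in $D$, because $\gamma$ has order $q$ and $D\subseteq\{0,\dots,q-1\}$ by Step 2. Thus the loop never returns $\bot$ mid-way and recovers every $d_t$. Reading the base-$r$ digits then gives $\{\tid_j : c_{t,j}>0 \text{ for some } t\}=S_L$. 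Finally, if $L=\emptyset$ then $v=0$, so $y_0=1$ and $\trace$ returns $\bot=S_L$. If $L\ne\emptyset$ then some $d_t>0$, because every $m_i\ge1$ and there is no cancellation, so $y_0\ne1$. The success probability is in fact $1$, which implies the stated $1-\negl(\secparam)$ bound.
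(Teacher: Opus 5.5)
Your proof is correct and follows the paper's argument. Both track plaintexts through degradation and merging, then use Lemma~\ref{lem:multiplicity} with $r>2^{h-1}$ and $q>r^n$ to obtain a carry-free base-$q$/base-$r$ expansion that determines $S_L$. You go slightly further in two ways: you check the digit-extraction loop of Algorithm~\ref{alg:eg-trace} (membership and uniqueness of $d_t$ in $D$, and $y_0\neq 1$ when $L\neq\emptyset$), and you state explicitly that the multiplicity bound is an assumption on how the inputs arise, which the paper's proof uses only implicitly.
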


\begin{proof}
  The tag $\genwatermark(\sktrace, \tid_i, b_i, h'_i)$ is issued at depth $h - h'_i$, so after $t_i$ applications of $\degrade$ it encrypts $q^{\tau_i} \cdot m_i \bmod q^h$ by the ElGamal homomorphism. For $\tau_i \geq h$ this value is $q^{\tau_i} m_i \equiv 0 \pmod{q^h}$, so expired inputs, like dummy inputs, contribute nothing. Merging via component-wise multiplication therefore encrypts $v = \sum_{i \in L} q^{\tau_i} \cdot m_i \pmod{q^h}$. If $L = \emptyset$, then $v = 0$ and $\trace$ returns $\bot = S_L$. Otherwise, for the base-$q$ decomposition to be collision-free, it suffices that, at every depth $\tau$, the partial sum $d_\tau = \sum_{i \in L:\,\tau_i=\tau} \kappa_i r^{i-1}$ is less than $q$, where $\kappa_i$ is the multiplicity of $\tid_i$ at depth $\tau$. By Lemma~\ref{lem:multiplicity} and the choice $r > 2^{h-1}$, each $\kappa_i \le 2^{h-1} < r$, so the depth-$\tau$ sum is at most $\sum_{i=1}^n (r-1) r^{i-1} = r^n - 1 < q$ by the constraint $q > r^n$. Hence no base-$q$ carry occurs across depths. Writing $v$ in base $q$, the digit at position $\tau$ is exactly $d_\tau$; since each coefficient $\kappa_i < r$, this is the base-$r$ encoding of the identifiers at depth $\tau$, which by Definition~\ref{def:base-r-encoding} uniquely determines $\{\tid_i : i \in L, \tau_i=\tau\}$. Taking the union over all $\tau$ yields $S_L$. \qed
\end{proof}

\begin{theorem}[Indistinguishability]
  \label{thm:indist}
  Under DDH in $\BG$, the ECT scheme satisfies indistinguishability.
\end{theorem}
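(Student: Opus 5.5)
We reduce indistinguishability to IND-CPA security of exponential ElGamal over $\BG$, and that in turn to DDH in $\BG$ (Definition~\ref{def:ddh-prime-power}). The challenge tag is $\mathsf{Enc}(\mathsf{pk}_{\text{enc}}, m_b)$ with $m_1 = q^{h-h'} r^{i-1}$ and $m_0 = 0$, and the signature $\sigma$ is a deterministic or independently randomized function of the ciphertext under $\mathsf{sk}_{\text{sig}}$. The reduction therefore first replaces the real $\setup$ with one in which the reduction itself generates $(\mathsf{pk}_{\text{sig}}, \mathsf{sk}_{\text{sig}})$, while $\mathsf{pk}_{\text{enc}}$ is supplied by the challenger. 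This lets it sign any ciphertext it obtains.

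Concretely, given a DDH tuple $(g, g^a, g^b, g^c)$ with $a,b,c$ uniform in $\BZ_{q^h}$, the reduction sets $\mathsf{pk}_{\text{enc}} = g^a$. It answers every $\genwatermark(\cdot,\tid_j,b_j,h'_j)$ oracle query by computing the plaintext $q^{h-h'_j} r^{j-1} b_j$ and encrypting it under $g^a$ with fresh randomness. No secret key is needed, since exponential ElGamal encryption is public. It then signs the result with $\mathsf{sk}_{\text{sig}}$. When $\CA$ outputs $(\tid_i,h')$, the reduction flips $\beta$ and forms the challenge
\[
  w^* = \bigl(g^b,\; g^c \cdot g^{m_\beta}\bigr),
\]
signs it, and returns $(w^*,\sigma^*)$. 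Post-challenge oracle queries are answered as before. The reduction outputs $1$ iff $\CA$ guesses $\beta$.

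If $c=ab$, then $w^*$ is distributed exactly as an honest encryption of $m_\beta$, because $b$ is uniform in $\BZ_{q^h}$, which is exactly the randomness space used by $\mathsf{Enc}$. The view of $\CA$ then matches the real game, and the reduction succeeds with the adversary's real advantage. If $c$ is uniform, then $g^c$ is a uniform element of $\BG$, since $g$ generates $\BG$ and $c$ is uniform modulo its order. The second component is then uniform and independent of $\beta$ even given $g^a$, $g^b$ and the signature, because the signature is computed from $w^*$ alone. Hence $\Pr[\beta'=\beta]=1/2$ exactly. The distinguishing advantage equals $|\Pr[\text{win}]-1/2|$, which is negligible under DDH in $\BG$.

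The step requiring the most care is the uniformity argument in the prime-power group. In a group of order $q^h$, a uniform exponent modulo $q^h$ does give a uniform group element, so no subgroup leakage appears. Still, we must check that issuance and oracle randomness are sampled from all of $\BZ_{q^h}$, as in Alg.~\ref{alg:eg-degrade}, and not from a subgroup. We must also note that the non-standard assumption is used verbatim, with no reduction to prime-order DDH. A single challenge ciphertext suffices, so no hybrid over multiple ciphertexts is needed. The oracle answers are simulated perfectly, so they do not affect the bound.
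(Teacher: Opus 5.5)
Your proof is correct and takes essentially the same route as the paper. In both, $\genwatermark$ is simulated with a self-generated signing key and the challenge ciphertext is signed, so any tag distinguisher becomes an ElGamal IND-CPA adversary. The only difference is that you unfold the IND-CPA-from-DDH step by embedding the challenge as $(g^b, g^c g^{m_\beta})$, whereas the paper simply cites it; this usefully confirms that the standard argument carries over unchanged to the order-$q^h$ group.
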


\begin{proof}
  Let $m_1=q^{\,h-h'}r^{i-1}$ and $m_0=0$. On input $b$, $\genwatermark$ encrypts $m_b$ and signs the resulting ciphertext. Suppose $\CD$ distinguishes the two cases with advantage $\epsilon$. Construct an IND-CPA adversary $\CB$: on input $\mathsf{pk}_{\text{enc}}$, generate an independent signing key pair $(\mathsf{pk}_{\text{sig}}, \mathsf{sk}_{\text{sig}})$, answer tagging queries by encrypting under $\mathsf{pk}_{\text{enc}}$ and signing with $\mathsf{sk}_{\text{sig}}$, and submit $(m_1,m_0)$ to the IND-CPA challenger. Upon receiving the challenge ciphertext $w$, compute $\sigma \leftarrow \mathsf{Sign}(\mathsf{sk}_{\text{sig}}, w)$ and forward $(w, \sigma)$ to $\CD$. Since $\mathsf{sk}_{\text{sig}}$ is independent of the IND-CPA challenge, $(w, \sigma)$ has the same distribution as in the real game. Thus, $\CB$ achieves advantage $\epsilon$ against IND-CPA. By IND-CPA security of ElGamal under DDH in $\BG$, $\epsilon \leq \negl(\secparam)$. \qed
\end{proof}

\begin{theorem}[Ephemeral traceability]
  \label{thm:ephemeral}
  The ECT scheme satisfies ephemeral traceability unconditionally. Specifically, for any $\tid_i \in [n]$ and any per-tag budget $h' \le h$, if $(w^{(0)}, \sigma) \leftarrow \genwatermark(\sktrace, \tid_i, 1, h')$ and $w^{(k)}$ is obtained by applying $\degrade$ at least $h'$ times, then $\trace(\sktrace, w^{(k)}) = \bot$.
\end{theorem}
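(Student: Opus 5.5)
The argument is a direct plaintext computation, following the same homomorphic bookkeeping as the proofs of Theorems~\ref{thm:trace-correctness} and~\ref{thm:merge-correctness}.

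First, by Alg.~\ref{alg:eg-tag}, $w^{(0)}$ is an exponential ElGamal encryption of $m = q^{h-h'} r^{i-1}$. Next, I will check what one call of Alg.~\ref{alg:eg-degrade} does to the plaintext. If $(A,B) = (g^{s}, g^{v}\mathsf{pk}_{\text{enc}}^{s})$, then
\[
(A^q g^{r'},\, B^q \mathsf{pk}_{\text{enc}}^{r'}) = (g^{qs+r'},\, g^{qv}\mathsf{pk}_{\text{enc}}^{qs+r'}),
\]
which is a valid encryption of $qv \bmod q^h$. By induction on $k$, $w^{(k)}$ is then a valid encryption of $q^{h-h'+k} r^{i-1} \bmod q^h$. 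For $k \ge h'$ the exponent satisfies $h-h'+k \ge h$, so the plaintext is a multiple of $q^h$ and hence $0$ in $\BZ_{q^h}$. Equivalently, the decryption $g^{v}$ equals $1$, because $g$ has order $q^h$ and so $g^{q^{h}\cdot x}=1$ for every integer $x$. This identity holds for every choice of randomness, so the statement is deterministic and needs no computational assumption.

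Finally, I will note that decryption is exact for any well-formed ciphertext. Indeed, $y_0 = B/A^{\mathsf{sk}_{\text{enc}}} = g^{v}$ holds identically, because $\mathsf{pk}_{\text{enc}} = g^{\mathsf{sk}_{\text{enc}}}$. Therefore $y_0 = 1$, the first check of Alg.~\ref{alg:eg-trace} fires, and $\trace$ returns $\bot$. This gives probability $1$, which is stronger than the negligible bound in Definition~\ref{def:ephemeral}.

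The only point needing care, and the likely obstacle, is showing that the rerandomization term $g^{r'}$ preserves the ElGamal form. That is, the new randomness $qs+r'$ must appear consistently in both components, so that no residual factor survives decryption. The displayed computation handles this, since exponents live in $\BZ_{q^h}$ and the group order kills any multiple of $q^h$. I will also remark that the degradation count $k$ beyond $h'$ is irrelevant: once the plaintext is $0$ it stays fixed under $v \mapsto qv$.
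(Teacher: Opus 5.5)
Your proof is correct and matches the paper's argument: after $k \ge h'$ applications of $\degrade$ the plaintext is $q^{h-h'+k} r^{i-1} \equiv 0 \pmod{q^h}$ and $0$ is fixed under $m \mapsto qm$, so $y_0 = 1$ and $\trace$ returns $\bot$ with probability $1$, without any computational assumption. Your explicit check that $\degrade$ sends $(g^{s}, g^{v}\mathsf{pk}_{\text{enc}}^{s})$ to a well-formed encryption of $qv$ just spells out the homomorphism step that the paper uses implicitly.
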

\begin{proof}
  The tag is issued at depth $h - h'$, so after $h'$ applications of $\degrade$ it encrypts $q^{h} \cdot m_i \equiv 0 \pmod{q^h}$, since the group has order $q^h$. As $0$ is a fixed point of $m \mapsto qm \bmod q^h$, all further degradations leave the plaintext at $0$. Decryption yields $0$, so $\trace$ returns $\bot$. The argument is information-theoretic; no computational assumption is used. \qed
\end{proof}

\begin{theorem}[Unforgeability]
  \label{thm:unforge}
  Under EUF-CMA security of the signature scheme, the ECT scheme satisfies unforgeability. In a ledger deployment, soundness of the host transaction proof and the nullifier mechanism reduce acceptance of a tag to possession of a valid transition history.
\end{theorem}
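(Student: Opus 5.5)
The plan is a reduction to EUF-CMA security of the issuance signature, combined with an algebraic invariant on valid histories. Let $\CA$ be an adversary that wins the unforgeability game with probability $\epsilon$. We build a forger $\CB$ against the signature scheme. $\CB$ receives $\mathsf{pk}_{\text{sig}}$ and runs the rest of $\setup$ itself, so it knows $\mathsf{sk}_{\text{enc}}$ and can evaluate $\trace$. It answers each $\genwatermark(\tid_i,b,h')$ query by computing $w\leftarrow\mathsf{Enc}(\mathsf{pk}_{\text{enc}},q^{h-h'}r^{i-1}b)$ and obtaining $\sigma$ from its signing oracle. This simulation is perfect. When $\CA$ outputs $(w^*,H)$ with $\mathsf{Verify}(\pp,w^*,H)=1$, $\CB$ looks through the issued tags in $H$. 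If some issued tag $w_e$ with a valid signature was never returned by the oracle, $\CB$ outputs $(w_e,\sigma_e)$ as its forgery.

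The key step is to show that if every issued tag in $H$ was an oracle answer, then $\trace(\sktrace,w^*)\subseteq\CQ$ whenever it is not $\bot$. This is deterministic, so $\CA$ can win only when $\CB$ forges, and $\epsilon\le\mathsf{Adv}^{\text{euf-cma}}(\CB)$.

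To prove this, we argue by induction along $H$, which is well-founded because $\mathsf{Verify}$ recomputes each transition and each tag is consumed at most once. The invariant is that every tag in $H$ encrypts a plaintext of the form $v=\sum_{i}\sum_{t<h}\kappa_{t,i}\,q^{t}r^{i-1}\bmod q^h$, where $\kappa_{t,i}\neq 0$ only if $\tid_i\in\CQ$.

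Base case: an oracle-issued tag encrypts either $0$ or $q^{h-h'}r^{i-1}$ with $\tid_i\in\CQ$, and both have this form.

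Inductive step: $\degrade$ multiplies the plaintext by $q$, which shifts every depth by one and drops the depth-$h$ terms. $\mergewatermarks$ adds plaintexts. Rerandomization leaves the plaintext unchanged. Both the payment transition and the ingestion transition are compositions of these operations, so they preserve the invariant.

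It remains to connect the invariant to the output of $\trace$. We combine the invariant with the fan-out bound of Lemma~\ref{lem:multiplicity}. Single-consumption together with the two-output transitions gives $d=2$, so each $\kappa_{t,i}\le 2^{h-1}<r$. As in the proof of Theorem~\ref{thm:merge-correctness}, each depth digit is then below $r^n<q$, so no carries occur. Hence the digits recovered in Alg.~\ref{alg:eg-trace} are exactly $d_t=\sum_i\kappa_{t,i}r^{i-1}$, and these lie in $D$. The base-$r$ decoding is unique by Definition~\ref{def:base-r-encoding}, so the returned set is $\{\tid_i:\exists t,\ \kappa_{t,i}>0\}\subseteq\CQ$. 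This contradicts the winning condition.

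I expect the main obstacle to be this multiplicity bound inside the game. Lemma~\ref{lem:multiplicity} is phrased for the ledger, so we must check that $\mathsf{Verify}$'s single-consumption rule really limits copying to the prescribed fan-out. An adversary could also reuse the same oracle answer as several issued leaves; this happens only if $\mathsf{Verify}$ treats identical issued tags as distinct. I would rule this out by having single-consumption apply to issued tags as well. Failing that, I would note that duplicates only raise coefficients of identifiers already in $\CQ$. A carry could move mass into a higher slot, but only into an index $i$ for which something was issued. I would rather enforce the bound through single-consumption than argue about carries.

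The second sentence of the theorem, the ledger deployment, follows from soundness of the host transaction proof, which is assumed. Soundness lets us extract, or argues the existence of, a witness history for every accepted tag. Per-identity, per-epoch nullifiers then make that history satisfy single-consumption, so the ledger adversary is captured by the game above.
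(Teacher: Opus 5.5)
Your EUF-CMA reduction and your support invariant follow the paper's proof. The gap is the step ``single-consumption together with the two-output transitions gives $d=2$, so each $\kappa_{t,i}\le 2^{h-1}<r$''. It does not follow, and your no-carry conclusion rests on it.

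Lemma~\ref{lem:multiplicity} starts from multiplicity one at depth $0$, so it bounds the copies of a \emph{single} issued contribution. Your $\kappa_{t,i}$ is the coefficient of $q^t r^{i-1}$ summed over \emph{all} issued leaves of $H$ that carry $\tid_i$. Definition~\ref{def:unforge} lets $\CA$ make any number of $b=1$ queries on the same identifier. Concretely:
\begin{itemize}
\item Query $(\tid_1,1,h-1)$ once to get $a_0$ (plaintext $q$), and query $(\tid_1,1,h)$ $r$ times to get $s_1,\dots,s_r$ (plaintext $1$).
\item For $k=1,\dots,r$, apply the payment transition to $(w_S,w_R)=(s_k,a_{k-1})$ and set $a_k=\mergewatermarks(\degrade(s_k),a_{k-1})$.
\item Every issued tag carries a genuine oracle signature, every transition is prescribed, and every tag is consumed once, so $\mathsf{Verify}$ accepts.
\item Yet $w^*=a_r$ encrypts $q(r+1)=q(1+1\cdot r)$, so its depth-$1$ digit has carried into slot $2$.
\end{itemize}
With the smallest admissible base $r=2^{h-1}+1$ this costs $r+1$ queries. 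No signature is forged, so the EUF-CMA reduction gains nothing.

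Neither of your mitigations helps. The $s_k$ are distinct ciphertexts, so making issued tags single-use is already respected by the attack. A carry out of slot $i$ lands in slot $i+1$, which need not lie in $\CQ$. The paper's own proof makes the same jump, from ``the multiplicity of a contribution at depth $t$ is at most $2^t$'' to ``no base-$r$ digit carries''. Mirroring it therefore does not close the gap.

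Whether the statement survives depends on the decoder. Alg.~\ref{alg:eg-trace} accepts a digit only if it lies in $D$. Elements of $D$ are sums $\sum_{i\in I}c_i r^{i-1}$ with $c_i\le 2^{h-1}<r$, which are genuine base-$r$ expansions supported on $I$. Hence $\trace$ returns $\bot$ or a subset of $\{\tid_i: i\in I\}$.
\begin{itemize}
\item If $I$ is read as the slots issued with $b=1$, then $I\subseteq\CQ$ and the winning condition can never hold. The ElGamal theorem then follows from this structural fact rather than from a multiplicity bound; in the attack above, $r+1\notin D$ and $\trace$ returns $\bot$.
\item If $I$ also counts dummy issuances, one extra query $(\tid_2,0,h)$ puts $r+1=1+1\cdot r$ into $D$. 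Then $\trace(\sktrace,w^*)=\{\tid_1,\tid_2\}$ with $\tid_2\notin\CQ$.
\item The Damg{\aa}rd--Jurik decoder only checks $d_t<r^n$, so the same history breaks it outright.
\end{itemize}
To keep your inductive argument, you need an extra hypothesis. For instance, allow at most $\ell$ tags with $b=1$ per identifier and choose $r>\ell\cdot 2^{h-1}$. Then $\kappa_{t,i}\le \ell\cdot 2^{h-1}<r$, and the rest of your proof goes through.
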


\begin{proof}
  Let $(w^*, H)$ be the adversary's output with $\mathsf{Verify}(\pp, w^*, H) = 1$. A reduction $\CB$ against EUF-CMA receives $\mathsf{pk}_{\text{sig}}$ and a signing oracle, generates the encryption key pair itself, and simulates the $\genwatermark$ oracle by encrypting the prescribed plaintext and querying the signing oracle. An issued tag in $H$ that the oracle never produced is a valid forgery, so, except with negligible probability, every issued tag in $H$ encrypts $q^{h-h'} r^{i-1}$ with $\tid_i \in \CQ$, or zero. We argue by induction along $H$ that every derived tag encrypts a value of the form $\sum_j \kappa_j\, q^{t_j} r^{i_j - 1}$ with $\tid_{i_j} \in \CQ$ and $\kappa_j$ bounded as in Lemma~\ref{lem:multiplicity}: degradation multiplies the plaintext by $q$, the payment transition copies the degraded sender contribution into at most two outputs while passing the recipient tag through once, and the ingestion step merges one issued contribution at its issuance depth; since $H$ consumes every tag at most once, the multiplicity of a contribution at depth $t$ is at most $2^t$. By $r > 2^{h-1}$ no base-$r$ digit carries, and by $q > r^n$ no base-$q$ digit carries, so $\trace(\sktrace, w^*)$ decodes exactly these identifiers; hence every $\tid \in \trace(\sktrace, w^*)$ lies in $\CQ$. \qed
\end{proof}

\begin{theorem}[Unlinkability]
  \label{thm:unlink}
  Under the DDH assumption in $\BG$, the ECT scheme satisfies unlinkability. In a ledger deployment, zero knowledge of the host transaction proof additionally hides the consumed account states.
\end{theorem}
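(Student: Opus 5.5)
The plan is a single game hop that replaces the rerandomizing factor in $\degrade$ with a uniformly random pair of group elements, followed by a DDH reduction in $\BG$ (Definition~\ref{def:ddh-prime-power}) showing the hop is undetectable. The observation driving it is that $\degrade$ (Alg.~\ref{alg:eg-degrade}) outputs
\[
  w' = \bigl(A_b^q \cdot g^{r'},\; B_b^q \cdot \mathsf{pk}_{\text{enc}}^{r'}\bigr),
\]
where $w_b=(A_b,B_b)$ and $r' \xleftarrow{\$} \BZ_{q^h}$. If the mask $(g^{r'},\mathsf{pk}_{\text{enc}}^{r'})$ is replaced by a uniform element of $\BG^2$, then $w'$ is uniform in $\BG^2$ and independent of $b$. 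This holds for arbitrary adversarial $w_0,w_1\in\BG^2$, including malformed or related pairs, because multiplication by a uniform group element is a bijection onto a uniform distribution.

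\textbf{Game $G_0$.} This is the unlinkability experiment of Definition~\ref{def:unlink} with $\mathsf{pk}_{\text{enc}}=g^x$.

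\textbf{Game $G_1$.} This is identical to $G_0$, except that the challenge is computed as $w'=(A_b^q g^{r'}, B_b^q g^{c})$ with $c\xleftarrow{\$}\BZ_{q^h}$ independent of everything else. Since $g$ generates $\BG$ of order $q^h$, both $g^{r'}$ and $g^c$ are uniform and independent in $\BG$. Hence $w'$ is uniform on $\BG^2$ regardless of $b$, and $\Pr[b'=b]=1/2$ exactly in $G_1$.

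\textbf{Reduction.} Next I would bound $|\Pr_{G_0}[b'=b]-\Pr_{G_1}[b'=b]|$ by building a DDH distinguisher $\CD$. On input $(g,X,Y,Z)$, $\CD$ proceeds as follows.
\begin{enumerate}
  \item It sets $\mathsf{pk}_{\text{enc}}=X$ and generates $(\mathsf{pk}_{\text{sig}},\mathsf{sk}_{\text{sig}})$ itself.
  \item It answers $\genwatermark$ queries by encrypting the prescribed plaintext $q^{h-h'}r^{i-1}b$ under $X$ and signing. This requires only public information and $\mathsf{sk}_{\text{sig}}$, never $\mathsf{sk}_{\text{enc}}$.
  \item It receives $(w_0,w_1)$, samples $b$, and returns $w'=(A_b^qY,\,B_b^qZ)$.
  \item It outputs $1$ iff $\CA$ guesses $b$ correctly.
\end{enumerate}
When $Z=g^{xy}$, this is exactly $G_0$ with $r'=y$. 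When $Z=g^c$, it is exactly $G_1$. Therefore $\CA$'s advantage is at most the DDH advantage of $\CD$, which is negligible.

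\textbf{Main obstacle.} The delicate step is checking that the simulation is perfect in a group of prime-power order rather than prime order. Uniform exponents in $\BZ_{q^h}$ must yield uniform group elements, which holds because $g$ has order exactly $q^h$. The DDH tuple distribution must also match Definition~\ref{def:ddh-prime-power}, with exponents drawn from $\BZ_{q^h}$ and not merely the order-$q$ subgroup; Alg.~\ref{alg:eg-degrade} does sample $r'$ from $\BZ_{q^h}$, so the two line up. It is also worth noting explicitly that the exponentiation $A_b^q$ may collapse information in $w_b$ (for instance, when $w_b$ lies in a small subgroup). This does not matter, because the uniform mask is applied after the exponentiation.

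For the ledger statement, I would appeal to zero knowledge of the host transaction proof. It lets the real proof be replaced by a simulated one that depends only on the public output tags, so the consumed states are hidden in addition to the algebraic link.
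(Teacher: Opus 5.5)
Your proposal is correct and follows the paper's proof essentially verbatim: the same DDH distinguisher sets $\mathsf{pk}_{\text{enc}} \gets X$ and returns $(A_b^q Y,\, B_b^q Z)$. A Diffie--Hellman tuple reproduces $\degrade(w_b)$ exactly, while a random $Z$ makes $w'$ uniform and independent of $b$. The one detail the paper states that you leave implicit is why you may assume $w_0,w_1\in\BG\times\BG$: group membership is publicly checkable via $x^{q^h}=1$, and without this an out-of-group $A_b^q$ would not be hidden by a uniform mask from $\BG$.
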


\begin{proof}
  Let $\CA$ be a PPT adversary submitting $(w_0, w_1)$ and receiving $w' \leftarrow \degrade(w_b)$ for a uniformly random $b$. A tag parses as a pair of elements of $\BG$, membership being publicly checkable via $x^{q^h} = 1$, so we may assume $w_0, w_1 \in \BG \times \BG$.

  Write $w_b = (A_b, B_b)$. Degradation returns $w' = \bigl(A_b^q \cdot g^{r'},\, B_b^q \cdot \mathsf{pk}_{\text{enc}}^{r'}\bigr)$ for uniform $r' \in \BZ_{q^h}$, that is, $w_b$ raised componentwise to the $q$-th power and multiplied by the rerandomization factor $(g^{r'}, \mathsf{pk}_{\text{enc}}^{r'})$. Construct a distinguisher $\CB$ for DDH: on challenge $(g, X, Y, Z)$, set $\mathsf{pk}_{\text{enc}} \gets X$, answer tagging queries by encrypting under $X$ and signing with a self-generated key, sample $b$, and give $\CA$ the challenge $w' = (A_b^q \cdot Y,\, B_b^q \cdot Z)$; output $1$ if $\CA$ guesses $b$. If $(g, X, Y, Z)$ is a Diffie--Hellman tuple, then $Y = g^{r'}$ and $Z = X^{r'}$ for uniform $r'$, and $w'$ is distributed exactly as $\degrade(w_b)$. If $Z$ is instead uniform, then $(Y, Z)$ is uniform on $\BG \times \BG$, so $w'$ is uniform and independent of $b$, and $\CA$ guesses $b$ with probability exactly $\frac{1}{2}$. The advantage of $\CA$ is therefore bounded by the DDH advantage of $\CB$, and $\Pr[b' = b] \leq \frac{1}{2} + \negl(\secparam)$. \qed
\end{proof}

\subsection{Damg{\aa}rd--Jurik instantiation}
\label{sec:dj}

Damg{\aa}rd--Jurik trades larger tags for a larger plaintext space, polynomial-time tracing, and security under DCR. Its plaintext ring is $\BZ_{N^s}$, where $N$ is nilpotent of index $s$, and exponentiating a ciphertext by $N$ multiplies its plaintext by $N$. For $c=(1+N)^m\rho^{N^s}$,
\[
  c^{N} = (1+N)^{Nm} \, \rho^{N \cdot N^s} = (1+N)^{Nm} \, (\rho^{N})^{N^s} = \mathsf{Enc}\bigl(N,\, N m \bmod N^s;\ \rho^{N}\bigr),
\]
so setting $s=h$ gives exact expiry after $h$ hops.

\begin{definition}[Decisional composite residuosity]
  \label{def:dcr}
  Let $N = p_1 p_2$ be an RSA modulus and $s \geq 1$. The \emph{decisional composite residuosity (DCR) assumption} holds if for all PPT distinguishers $\CD$:
  \[
    \left|\Pr[\CD(N, z) = 1] - \Pr[\CD(N, \rho^{N^s} \bmod N^{s+1}) = 1]\right| \leq \negl(\secparam)
  \]
  where $z \xleftarrow{\$} \BZ_{N^{s+1}}^*$ and $\rho \xleftarrow{\$} \BZ_{N^{s+1}}^*$, that is, a uniform element of $\BZ_{N^{s+1}}^*$ cannot be distinguished from a uniform $N^s$-th residue.
\end{definition}

Damg{\aa}rd--Jurik is IND-CPA secure under DCR for every $s$, with the assumption equivalent to the $s=1$ case~\cite{DBLP:conf/pkc/DamgardJ01}. We again use $m_i=r^{i-1}$, choose $r>2^{h-1}$ and $N>r^n$, and let base $N$ separate depths. $\setup$ (Alg.~\ref{alg:dj-setup}) samples the modulus and fixes $s = h$.

\begin{algorithm}[ht!]
\caption{Setup, Damg{\aa}rd--Jurik.}
\label{alg:dj-setup}
\Alg{$\setup(1^\secparam, n, h, r) \rightarrow (\pp, \sktrace)$}{
  Let $r$ be an integer with $r > 2^{h-1}$ \tcp*{encoding base, Lemma~\ref{lem:multiplicity}}
  Sample distinct primes $p_1, p_2$ and set $N \gets p_1 p_2$, with $N > r^n$ and $|N|$ sized for $\secparam$-bit factoring security\;
  $s \gets h$ \tcp*{plaintext ring $\BZ_{N^{h}}$, ciphertext ring $\BZ_{N^{h+1}}^*$}
  $\mathsf{pk}_{\text{enc}} \gets N$, \quad $\mathsf{sk}_{\text{enc}} \gets \lambda_N = \mathrm{lcm}(p_1 - 1, p_2 - 1)$\;
  $(\mathsf{pk}_{\text{sig}}, \mathsf{sk}_{\text{sig}}) \leftarrow \mathsf{SigKeyGen}(1^\secparam)$\;
  Assign $m_i \gets r^{i-1}$ for each user $\tid_i$, $i \in [n]$\;
  $\pp \gets (N, h, n, r, \mathsf{pk}_{\text{sig}})$, \quad $\sktrace \gets (\mathsf{sk}_{\text{enc}}, \mathsf{sk}_{\text{sig}})$\;
  \KwRet{$(\pp, \sktrace)$}\;
}
\end{algorithm}

The public tag space $\BZ_{N^{h+1}}$ exposes the hop budget, while $n_{\max}=\lfloor\log_r N\rfloor$ exposes identifier capacity. Enlarging either bound enlarges every tag.

$\genwatermark$ (Alg.~\ref{alg:dj-tag}) issues at depth $h-h'$, encrypts zero for a dummy, and signs the resulting ciphertext.

\begin{algorithm}[ht!]
\caption{Tag issuance, Damg{\aa}rd--Jurik.}
\label{alg:dj-tag}
\Alg{$\genwatermark(\sktrace, \tid_i, b, h') \rightarrow (w, \sigma)$}{
  $m \gets N^{h-h'} \cdot r^{i-1} \cdot b$ \tcp*{$h' \le h$}
  $\rho \xleftarrow{\$} \BZ_{N^{h+1}}^*$\;
  $w \leftarrow \mathsf{Enc}(N, m) = (1+N)^{m} \rho^{N^{h}} \bmod N^{h+1}$\;
  $\sigma \leftarrow \mathsf{Sign}(\mathsf{sk}_{\text{sig}}, w)$\;
  \KwRet{$(w, \sigma)$}\;
}
\end{algorithm}

$\degrade$ (Alg.~\ref{alg:dj-degrade}) implements $m\mapsto Nm\bmod N^h$ and rerandomizes with a fresh $N^h$-th residue.

\begin{algorithm}[ht!]
\caption{Degradation, Damg{\aa}rd--Jurik.}
\label{alg:dj-degrade}
\Alg{$\degrade(w) \rightarrow w'$}{
  $\rho \xleftarrow{\$} \BZ_{N^{h+1}}^*$\;
  $w' \gets w^{N} \cdot \rho^{N^{h}} \bmod N^{h+1}$\;
  \KwRet{$w'$}\;
}
\end{algorithm}

In the ledger, the transaction statement enforces $w'=w^N\rho^{N^h}\bmod N^{h+1}$ within the transaction composition of Section~\ref{sec:tracing-scheme}.

$\mergewatermarks$ (Alg.~\ref{alg:dj-merge}) multiplies ciphertexts, adding their plaintexts modulo $N^h$.

\begin{algorithm}[ht!]
\caption{Merging, Damg{\aa}rd--Jurik.}
\label{alg:dj-merge}
\Alg{$\mergewatermarks(w_1, \ldots, w_m) \rightarrow w'$}{
  $w' \gets \prod_{j=1}^m w_j \bmod N^{h+1}$\;
  \KwRet{$w'$}\;
}
\end{algorithm}

$\trace$ (Alg.~\ref{alg:dj-trace}) decrypts $v$ directly, then reads base-$N$ depths and base-$r$ identifiers in $O(nh)$ digit operations. The value $v=0$ denotes either a dummy or an expired tag. A digit $d_t \geq r^n$ cannot arise from a valid transition history, and $\trace$ then returns $\bot$.

\begin{algorithm}[ht!]
\caption{Tracing, Damg{\aa}rd--Jurik.}
\label{alg:dj-trace}
\Alg{$\trace(\sktrace, w) \rightarrow S$ or $\bot$}{
  $v \leftarrow \mathsf{Dec}(\mathsf{sk}_{\text{enc}}, w) \in \BZ_{N^{h}}$\;
  \lIf{$v = 0$}{\KwRet{$\bot$}}
  Write $v = \sum_{t=0}^{h-1} d_t \cdot N^t$ in base $N$\;
  \lIf{$d_t \geq r^n$ for some $t$}{\KwRet{$\bot$}}
  \ForEach{$t$ with $d_t \neq 0$}{
    Decompose $d_t = \sum_{i=1}^{n} c_{t,i} \cdot r^{i-1}$ in base $r$\;
    $S_t \gets \{\tid_i : i \in [n],\, c_{t,i} > 0\}$\;
  }
  \KwRet{$\bigcup_t S_t$}\;
}
\end{algorithm}

Security additionally uses EUF-CMA signatures and the completeness, soundness, and zero knowledge of the host transaction proof.

\begin{theorem}[Trace correctness, Damg{\aa}rd--Jurik]
  \label{thm:dj-trace}
  For any $\tid_i \in [n]$, any per-tag budget $h' \le h$, and any $t < h'$, let $(w, \sigma) \leftarrow \genwatermark(\sktrace, \tid_i, 1, h')$ and let $w'$ be obtained by applying $\degrade$ exactly $t$ times. Then $\tid_i \in \trace(\sktrace, w')$ with probability $1 - \negl(\secparam)$.
\end{theorem}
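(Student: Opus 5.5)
The plan mirrors the proof of Theorem~\ref{thm:trace-correctness}, replacing the group homomorphism of ElGamal with the ring structure of Damg{\aa}rd--Jurik. First, I will track the plaintext through issuance and degradation. $\genwatermark(\sktrace,\tid_i,1,h')$ outputs $w=(1+N)^{m}\rho^{N^h}$ with $m=N^{h-h'}r^{i-1}$. By the identity displayed at the start of Section~\ref{sec:dj}, one application of $\degrade$ gives $w^N\rho'^{N^h}=(1+N)^{Nm}(\rho^N\rho')^{N^h}$. This is a valid Damg{\aa}rd--Jurik encryption of $Nm \bmod N^h$ with randomness $\rho^N\rho'\in\BZ^*_{N^{h+1}}$. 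An induction on $t$ then shows that after $t$ applications the ciphertext encrypts $v=N^{h-h'+t}r^{i-1}\bmod N^h$.

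Second, I will check that $v$ is nonzero and has the expected base-$N$ shape. Since $t<h'$, the exponent $\tau=h-h'+t$ satisfies $\tau\le h-1$. Since $r^{i-1}\le r^{n-1}<r^n<N$, the product $N^{\tau}r^{i-1}$ is strictly less than $N^h$, so no reduction occurs and $v\neq 0$. Uniqueness of base-$N$ representation then gives $d_\tau=r^{i-1}$ and $d_t=0$ for every other $t$. This passes the check $d_t<r^n$ in Alg.~\ref{alg:dj-trace}, so $\trace$ does not return $\bot$. Base-$r$ decomposition of $d_\tau$ gives $c_{\tau,i}=1$, so $\tid_i\in S_\tau\subseteq\trace(\sktrace,w')$.

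Third, I will invoke correctness of Damg{\aa}rd--Jurik decryption: $\mathsf{Dec}(\lambda_N,\cdot)$ returns the plaintext in $\BZ_{N^h}$ for any ciphertext of the form $(1+N)^x\rho^{N^h}$ with $\rho$ a unit. This holds deterministically, so the probability is in fact $1$ whenever all sampled $\rho$ lie in $\BZ^*_{N^{h+1}}$. The only potential loss is if sampling accidentally yields a non-unit, for example when implemented as sampling from $\BZ_{N^{h+1}}$. That event reveals a factor of $N$, so it has negligible probability, which accounts for the $1-\negl(\secparam)$ bound.

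The main obstacle is the degradation step: confirming that $w^N$ remains a well-formed encryption, meaning the randomness stays an $N^h$-th residue of a unit and the plaintext is multiplied by $N$ modulo $N^h$ rather than modulo $N^{h+1}$. I would handle this by noting that $(1+N)$ has order $N^h$ in $\BZ^*_{N^{h+1}}$, so exponents of $(1+N)$ are naturally read modulo $N^h$. Everything else is digit bookkeeping identical to the ElGamal case.
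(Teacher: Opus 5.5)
Your proposal is correct and follows the same route as the paper's proof. Both track the plaintext to $N^{h-h'+t}r^{i-1} \bmod N^h$ via the $c\mapsto c^N$ identity, use $r^{i-1}<N$ and $h-h'+t<h$ to get a nonzero value whose only base-$N$ digit is $r^{i-1}$, and then decode that digit in base $r$. Beyond the paper, you spell out steps it leaves implicit: the induction showing the randomness $\rho^N\rho'$ stays a unit, that the value passes the $d_t\ge r^n$ guard in Algorithm~\ref{alg:dj-trace}, and deterministic Damg{\aa}rd--Jurik decryption correctness.
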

\begin{proof}
  Raising a ciphertext to $N$ multiplies the plaintext by $N$ in $\BZ_{N^{h}}$, so the tag, which encrypts $N^{\,h-h'} r^{i-1}$, encrypts $N^{\,h-h'+t} r^{i-1} \bmod N^{h}$ after $t$ degradations. Since $r^{i-1} \leq r^{n-1} < N$ by the constraint $N > r^n$, and $h-h'+t < h$, the integer $N^{\,h-h'+t} r^{i-1}$ is positive and strictly below $N^{\,h-h'+t+1} \leq N^{h}$, hence nonzero in $\BZ_{N^{h}}$. Its base-$N$ digit at position $h-h'+t$ is $r^{i-1}$ and all other digits vanish; decomposing that digit in base $r$ gives a unit coefficient at position $i-1$, so $\trace$ returns $\tid_i$. \qed
\end{proof}

\begin{theorem}[Merge correctness, Damg{\aa}rd--Jurik]
  \label{thm:dj-merge}
  For $i \in [m]$, let $w_i \leftarrow \genwatermark(\sktrace, \tid_i, b_i, h'_i)$, and suppose each $w_i$ has been degraded $t_i \geq 0$ times before merging; write $\tau_i = (h - h'_i) + t_i$ for its current depth and let $L = \{i : b_i = 1 \wedge \tau_i < h\}$ be the live inputs. Let $w' \leftarrow \mergewatermarks(w_1, \ldots, w_m)$. Then $w'$ encrypts $\sum_{i \in L} N^{\tau_i} m_i \bmod N^{h}$, and $$\Pr[\trace(\sktrace, w') = S_L] \ge 1 - \negl(\secparam),$$ with $S_L$ as in Definition~\ref{def:merge-correctness}.
\end{theorem}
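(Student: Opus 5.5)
The plan is to mirror the proof of Theorem~\ref{thm:merge-correctness}, replacing the ElGamal exponent ring $\BZ_{q^h}$ by the Damg{\aa}rd--Jurik plaintext ring $\BZ_{N^h}$ and the prime $q$ by the modulus $N$.

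\textbf{Step 1: plaintext of each input.} The tag $\genwatermark(\sktrace,\tid_i,b_i,h'_i)$ encrypts $b_i N^{h-h'_i} m_i$. Each application of $\degrade$ raises the ciphertext to the $N$-th power, which by the identity $c^N=\mathsf{Enc}(N,Nm\bmod N^h;\rho^N)$ multiplies the plaintext by $N$. It then multiplies by a fresh $N^h$-th residue, an encryption of $0$ that leaves the plaintext unchanged. After $t_i$ steps, $w_i$ therefore encrypts $b_iN^{\tau_i}m_i\bmod N^h$.

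\textbf{Step 2: the merged plaintext.} If $\tau_i\ge h$, this value is $0$ in $\BZ_{N^h}$, and dummy inputs ($b_i=0$) also give $0$. Multiplication of Damg{\aa}rd--Jurik ciphertexts adds plaintexts modulo $N^h$, so $w'$ encrypts $v=\sum_{i\in L}N^{\tau_i}m_i\bmod N^h$, which is the first claim. Decryption with $\lambda_N$ is perfectly correct, so the probability bound in fact holds with probability $1$.

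\textbf{Step 3: decoding.} If $L=\emptyset$, then $v=0$ and $\trace$ returns $\bot=S_L$. Otherwise, group the live terms by depth: $d_\tau=\sum_{i\in L,\tau_i=\tau}\kappa_i r^{i-1}$, where $\kappa_i$ is the multiplicity at depth $\tau$. Lemma~\ref{lem:multiplicity} gives $\kappa_i\le 2^{h-1}<r$, so $d_\tau\le r^n-1<N$ by the setup constraint $N>r^n$. Hence $\sum_{\tau<h}d_\tau N^\tau<N^h$ is already reduced, and its base-$N$ digits are exactly the $d_\tau$. In particular no digit satisfies $d_\tau\ge r^n$, so $\trace$ does not abort. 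Each $d_\tau$ is a carry-free base-$r$ encoding, so Definition~\ref{def:base-r-encoding} recovers $\{\tid_i:i\in L,\tau_i=\tau\}$, and the union over $\tau$ is $S_L$. One also has $d_\tau\neq0$ whenever depth $\tau$ has a live input, since $m_i\ge1$.

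\textbf{Main obstacle.} The only delicate point is Step 3's claim that no carries occur, both within the base-$r$ encoding and across base-$N$ digits. This rests on the multiplicity bound from Lemma~\ref{lem:multiplicity}. In the literal statement with fresh independent tags each $\kappa_i$ is $1$, but the argument is written for the general multiplicity so that it also covers identifiers repeated among the inputs.
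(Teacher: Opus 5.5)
Your proof is correct and follows the paper's argument essentially step for step. Expired and dummy contributions vanish modulo $N^h$ under homomorphic addition. The per-depth digit satisfies $d_\tau \le r^n - 1 < N$ via Lemma~\ref{lem:multiplicity} and $r > 2^{h-1}$, which gives carry-free base-$N$ and then base-$r$ decoding. Your extra remarks (perfect decryption, the $d_t \ge r^n$ check never firing, the sum already reduced modulo $N^h$) are useful details the paper leaves implicit.

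One caveat applies equally to the paper's proof. Lemma~\ref{lem:multiplicity} bounds multiplicities produced by bounded fan-out transitions, not arbitrarily many fresh tags for the same identifier at the same depth, which this statement allows. So your claim that the argument ``also covers identifiers repeated among the inputs'' silently assumes at most $2^{h-1}$ such repetitions per depth.
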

\begin{proof}
  Componentwise multiplication adds plaintexts modulo $N^{h}$. An input with $\tau_i \geq h$ contributes $N^{\tau_i} m_i \equiv 0 \pmod{N^h}$, so expired inputs, like dummy inputs, vanish, which gives the stated value $v$. If $L = \emptyset$, then $v = 0$ and $\trace$ returns $\bot = S_L$. Otherwise, let $d_\tau = \sum_{i \in L:\, \tau_i = \tau} \kappa_i r^{i-1}$ be the contribution at depth $\tau$, where $\kappa_i$ is the multiplicity of $\tid_i$ at that depth. By Lemma~\ref{lem:multiplicity} and the choice $r > 2^{h-1}$, each $\kappa_i \le 2^{h-1} < r$, so $d_\tau \leq \sum_{i=1}^n (r-1) r^{i-1} = r^n - 1 < N$ by the constraint $N > r^n$. Every depth therefore occupies one base-$N$ digit of $v$ with no carry, and each digit is the base-$r$ encoding of the identifiers at that depth, which by Definition~\ref{def:base-r-encoding} determines them uniquely. Taking the union over $\tau$ yields $S_L$. \qed
\end{proof}

\begin{theorem}[Indistinguishability, Damg{\aa}rd--Jurik]
  \label{thm:dj-indist}
  Under DCR (Definition~\ref{def:dcr}), the Damg{\aa}rd--Jurik instantiation satisfies indistinguishability.
\end{theorem}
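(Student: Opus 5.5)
The plan is to mirror the proof of Theorem~\ref{thm:indist}, replacing ElGamal/DDH by Damg{\aa}rd--Jurik/DCR. On input $b$, $\genwatermark$ outputs a Damg{\aa}rd--Jurik encryption of $m_b$ together with a signature on the ciphertext, where $m_1 = N^{h-h'} r^{i-1}$ and $m_0 = 0$. We reduce to IND-CPA security of Damg{\aa}rd--Jurik with $s = h$. Given an adversary $\CA$ with advantage $\epsilon$ in the game of Definition~\ref{def:indist}, the reduction $\CB$ proceeds as follows.
\begin{enumerate}
\item $\CB$ receives $N$ from the IND-CPA challenger. It generates $(\mathsf{pk}_{\text{sig}}, \mathsf{sk}_{\text{sig}})$ itself and gives $\CA$ the parameters $\pp = (N, h, n, r, \mathsf{pk}_{\text{sig}})$.
\item $\CB$ answers every $\genwatermark$ query $(\tid_j, b_j, h'_j)$ publicly. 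It computes $(1+N)^{m}\rho^{N^h} \bmod N^{h+1}$ with $m = N^{h-h'_j} r^{j-1} b_j$, then signs the result. No decryption key is needed for this.
\item When $\CA$ outputs $(\tid_i, h')$, $\CB$ submits the pair $(m_1, m_0)$ to the IND-CPA challenger.
\item On receiving the challenge $w$, $\CB$ computes $\sigma \leftarrow \mathsf{Sign}(\mathsf{sk}_{\text{sig}}, w)$ and runs $\CA$ on $(w, \sigma)$, still answering oracle queries as before.
\item $\CB$ outputs whatever bit $\CA$ outputs.
\end{enumerate}

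The view of $\CA$ inside the simulation is identical to its view in the real game. The signing key is independent of the challenge, and honestly generated tags are distributed exactly as in the real game. Hence $\CB$ wins the IND-CPA game with the same advantage $\epsilon$.

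The remaining step is to justify IND-CPA security of Damg{\aa}rd--Jurik under Definition~\ref{def:dcr} for $s = h$. This is the one step with real content, and I would cite~\cite{DBLP:conf/pkc/DamgardJ01} rather than reprove it. For completeness, the argument goes as follows. Under DCR, the masking term $\rho^{N^h}$ with $\rho$ uniform in $\BZ_{N^{h+1}}^*$ can be replaced by a uniform element $z \in \BZ_{N^{h+1}}^*$. Once this is done, $(1+N)^{m} z$ is uniform in $\BZ_{N^{h+1}}^*$ for every $m$, because multiplication by $(1+N)^m$ is a bijection on the group. The challenge ciphertext is then independent of $b$.

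The main obstacle is accounting for the oracle queries. The hybrid changes only the challenge randomness, while the oracle ciphertexts are generated by $\CB$ with independent randomness. They therefore do not interfere with the single-challenge hybrid, and no multi-challenge argument is needed.
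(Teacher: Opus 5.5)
Your proposal is correct and follows the same route as the paper. It reduces directly to IND-CPA security of Damg{\aa}rd--Jurik with $s=h$: $\CB$ generates its own signing key, answers tagging queries by public encryption and signing, submits $(N^{h-h'}r^{i-1},0)$ as the challenge pair, and signs the challenge ciphertext before forwarding it. The only difference is that the paper simply cites Damg{\aa}rd and Jurik for IND-CPA security under DCR, whereas you also sketch that step correctly (replace $\rho^{N^h}$ by a uniform unit $z$, after which $(1+N)^m z$ is uniform and independent of the challenge bit).
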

\begin{proof}
  A tag issued with $b = 1$ is $\mathsf{Enc}(N, N^{\,h-h'} r^{i-1})$ and a tag issued with $b = 0$ is $\mathsf{Enc}(N, 0)$, so a distinguisher between them yields an IND-CPA adversary. On input the public key $N$, the adversary $\CB$ generates an independent signing key pair, answers tagging queries by encrypting under $N$ and signing with $\mathsf{sk}_{\text{sig}}$, submits the challenge messages $(N^{\,h-h'} r^{i-1},\, 0)$, signs the challenge ciphertext $w$ with $\mathsf{sk}_{\text{sig}}$, and forwards $(w, \sigma)$ to the distinguisher. The signing key is independent of the challenge, so the simulated pair is distributed as in the real game and $\CB$ inherits the distinguisher's advantage. Damg{\aa}rd--Jurik is IND-CPA secure under DCR~\cite{DBLP:conf/pkc/DamgardJ01}, so that advantage is negligible. \qed
\end{proof}

\begin{theorem}[Ephemeral traceability, Damg{\aa}rd--Jurik]
  \label{thm:dj-ephemeral}
  For any $\tid_i \in [n]$ and any per-tag budget $h' \le h$, if $(w^{(0)}, \sigma) \leftarrow \genwatermark(\sktrace, \tid_i, 1, h')$ and $w^{(k)}$ is obtained by applying $\degrade$ at least $h'$ times, then $\trace(\sktrace, w^{(k)}) = \bot$. The statement holds unconditionally and for every modulus $N$.
\end{theorem}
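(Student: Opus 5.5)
The plan follows the proof of Theorem~\ref{thm:ephemeral}, replacing the group-order argument with the Damg{\aa}rd--Jurik homomorphism computed at the start of Section~\ref{sec:dj}. First, I will track the plaintext under $\degrade$. Write a ciphertext as $c=(1+N)^m\rho^{N^h} \bmod N^{h+1}$. By the displayed identity, $c^N=(1+N)^{Nm}(\rho^N)^{N^h}$. Multiplying by a fresh residue $\rho'^{N^h}$ gives $(1+N)^{Nm}(\rho^N\rho')^{N^h}$, which is again a valid encryption of $Nm \bmod N^h$. This uses only that $1+N$ has order $N^h$ in $\BZ_{N^{h+1}}^*$, which follows from the binomial expansion $(1+N)^{x}\equiv 1+xN+\binom{x}{2}N^2+\cdots$. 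It does not depend on how $N$ was sampled, so the argument holds for every modulus $N$, as the statement claims.

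Next, I will iterate. The issued tag encrypts $N^{h-h'}r^{i-1}$. After $k$ applications of $\degrade$, it encrypts $N^{h-h'+k}r^{i-1} \bmod N^h$. For $k\ge h'$ the exponent is at least $h$, so this value is $0$ in $\BZ_{N^h}$. Alternatively, after exactly $h'$ steps the plaintext is $0$, and $0$ is a fixed point of $m\mapsto Nm$, so all further steps keep it at $0$. At that point the ciphertext has the form $(1+N)^0\tilde\rho^{N^h}=\tilde\rho^{N^h}$, a pure $N^h$-th residue.

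Finally, I will argue decryption. Decryption with $\lambda_N$ computes $c^{\lambda_N}=(1+N)^{m\lambda_N}\tilde\rho^{N^h\lambda_N}$. The second factor equals $1$, because $|\BZ_{N^{h+1}}^*|=\phi(N)N^h$ and $\lambda_N N^h$ is a multiple of the exponent of this group; more precisely, $\tilde\rho^{\lambda_N}\equiv 1 \pmod N$, so $\tilde\rho^{\lambda_N}$ lies in the subgroup $1+N\BZ$ of order $N^h$. Hence $c^{\lambda_N}=1$, and the recovery algorithm returns $v=0$. Alg.~\ref{alg:dj-trace} then returns $\bot$ at its first check.

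The only delicate step is this last one: justifying that decryption of a pure residue yields exactly $0$ for every modulus, including degenerate ones, rather than appealing to correctness of Damg{\aa}rd--Jurik. I would handle it with the explicit group-order argument above. If needed, I would note that when $p_1,p_2$ are not coprime to $N$'s totient, correctness is still at the level of the paper's setup, which samples distinct primes. No computational assumption enters, so the bound is unconditional.
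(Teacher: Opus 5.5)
Your proposal is correct and is essentially the paper's argument: the plaintext becomes $N^{h-h'+k}r^{i-1}\equiv 0 \pmod{N^h}$ for $k\ge h'$, $0$ is a fixed point of $m\mapsto Nm$, and so $\trace$ returns $\bot$. The only addition is that you explicitly check that decrypting the resulting pure $N^h$-th residue gives $v=0$, using the squarefree modulus produced by $\setup$, a step the paper leaves to Damg{\aa}rd--Jurik correctness (so the paper's ``every modulus $N$'' really refers to the plaintext identity $N^h\equiv 0\pmod{N^h}$). One small correction: you need only $(1+N)^{N^h}\equiv 1 \pmod{N^{h+1}}$, which the binomial expansion gives for every $N$, not that $1+N$ has order exactly $N^h$; that stronger claim fails for even $N$ (for $N=2$ and $h\ge 2$ the order is $2^{h-1}$).
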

\begin{proof}
  The tag is issued at depth $h - h'$, so after $h'$ degradations it encrypts $N^{h} m_i \equiv 0 \pmod{N^{h}}$, and $0$ is a fixed point of $m \mapsto N m \bmod N^{h}$, so every further degradation leaves the plaintext at $0$ and $\trace$ returns $\bot$. The identity $N^{h} \equiv 0 \pmod{N^{h}}$ holds for every integer $N$, so the argument is information-theoretic and uses no property of the public parameters beyond the value of $h$ fixed by the ciphertext ring. \qed
\end{proof}

\begin{theorem}[Unforgeability, Damg{\aa}rd--Jurik]
  \label{thm:dj-unforge}
  Under EUF-CMA security of the signature scheme, the Damg{\aa}rd--Jurik instantiation satisfies unforgeability. In a ledger deployment, soundness of the host transaction proof and the nullifier mechanism reduce acceptance of a tag to possession of a valid transition history.
\end{theorem}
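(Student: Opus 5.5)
The plan follows the proof of Theorem~\ref{thm:unforge} for the ElGamal instantiation. Only the algebra of the plaintext ring changes: $\BZ_{q^h}$ with nilpotent element $q$ becomes $\BZ_{N^h}$ with nilpotent element $N$.

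The first step is the reduction to EUF-CMA. Let $\CA$ output $(w^*,H)$ with $\mathsf{Verify}(\pp,w^*,H)=1$. We build $\CB$, which receives $\mathsf{pk}_{\text{sig}}$ and a signing oracle. It samples the modulus $N=p_1p_2$ itself, so it knows $\lambda_N$ and can decrypt. It answers each $\genwatermark$ query by computing $\mathsf{Enc}(N,N^{h-h'}r^{i-1}b)$ and forwarding the ciphertext to the signing oracle. This simulation is perfect.

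Every issued tag in $H$ carries a valid signature, because $\mathsf{Verify}$ checks it. If some issued tag was never returned by the oracle, $\CB$ outputs it as a forgery. Otherwise every issued tag in $H$ is an oracle output, so its plaintext is either $0$ or $N^{h-h'}r^{i-1}$ with $\tid_i\in\CQ$. Note that the adversary cannot reuse a signature on a different ciphertext encrypting the same value: the signature is on the ciphertext $w$ itself, and $\mathsf{Verify}$ checks the issued ciphertexts literally.

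The second step is an induction along the history $H$, which $\mathsf{Verify}$ has recomputed step by step. The claim is that every tag in $H$ decrypts to a value of the form $\sum_j \kappa_j N^{t_j} r^{i_j-1} \bmod N^h$, with each $\tid_{i_j}\in\CQ$ and each multiplicity at depth $t$ bounded by $2^t$.

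The inductive steps use the identity $c^N\rho^{N^h}=\mathsf{Enc}(Nm \bmod N^h)$ displayed before Definition~\ref{def:dcr}, together with additive homomorphism:
\begin{itemize}
\item \degrade multiplies the plaintext by $N$, moving each contribution from depth $t$ to depth $t+1$, or to $0$ when it reaches depth $h$.
\item A payment transition copies the degraded sender contribution into two outputs and passes the recipient tag through once.
\item An ingestion step adds one issued contribution, and that contribution lies in $\CQ$ by the first step.
\end{itemize}
Since $\mathsf{Verify}$ forbids consuming a tag twice, the fan-out is $d=2$, and Lemma~\ref{lem:multiplicity} bounds multiplicities by $2^{h-1}<r$.

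The third step is decoding, and it is the main obstacle. We must show that $\trace$ reads off exactly these identifiers, with no carries that could fabricate a slot outside $\CQ$. This mirrors the proof of Theorem~\ref{thm:dj-merge}. Each depth digit satisfies $d_t\le r^n-1<N$, so the base-$N$ digits are exactly the per-depth sums and the check $d_t\ge r^n$ never triggers. Each coefficient is below $r$, so the base-$r$ digits are exactly the multiplicities.

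The subtle point I would examine first is the overlap of contributions that different ingestions place at the same depth for the same identifier. I would handle it by taking the multiplicity bound per account state rather than per issuance. The degrade-before-merge rule of Section~\ref{sec:system-model} keeps issuances from different epochs at distinct depths, so overlapping issuances never share a digit.

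Given these bounds, every nonzero coefficient $c_{t,i}$ corresponds to some $\tid_i\in\CQ$. Hence $\trace(\sktrace,w^*)\subseteq\CQ$, and the adversary succeeds only by producing a signature forgery, which happens with negligible probability.

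The ledger clause follows as in Theorem~\ref{thm:unforge}. Soundness of the host transaction proof, together with the per-identity, per-epoch nullifiers, lets us extract for every accepted tag a history that $\mathsf{Verify}$ accepts, so the game above covers ledger adversaries.
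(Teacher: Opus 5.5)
There is a genuine gap. Your outline follows the paper's proof step for step: EUF-CMA on the issued tags, induction along $H$, then carry-free decoding. The decoding step fails exactly at the point you flagged, and your proposed repair does not close it.

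Lemma~\ref{lem:multiplicity} bounds the copies that descend from \emph{one} issued contribution. The coefficient $c_{t,i}$ that $\trace$ reads is instead the sum over \emph{all} issuances of $\tid_i$ that sit at depth $t$ in the same tag. Nothing in Definition~\ref{def:unforge} bounds that sum: $\CA$ may query $\genwatermark(\sktrace,\tid_1,1,h)$ arbitrarily often, and $\mathsf{Verify}$ only checks signatures, recomputed transitions, and single consumption.

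Here is a concrete attack. Let $e_1,\dots,e_r$ be issued tags for $\tid_1$, each encrypting $1$, and let $R_0$ be an issued dummy. Apply the payment transitions $(e_j,R_{j-1})\mapsto(\degrade(e_j),\,\degrade(e_j)\cdot R_{j-1})$ for $j=1,\dots,r$. Every tag is consumed once and no signature is forged. Yet $R_r$ encrypts $rN$, whose depth-$1$ digit is $r=r^{1}<r^n$. The Damg{\aa}rd--Jurik $\trace$ therefore returns $\{\tid_2\}$ with $\tid_2\notin\CQ$. For $h,n\ge 2$ and polynomial $r$ (e.g.\ $h=10$, $r=513$), this adversary wins with probability $1$.

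The degrade-before-merge rule cannot help. It only separates a fresh board entry from what the ingesting account already holds. A payment, by contrast, adds the degraded sender contribution to the recipient \emph{without} degrading the recipient. So contributions from different issuances can be driven to the same depth of one tag. In the ledger, the same happens with entries from different epochs parked in dormant accounts, which do not age. No transition enforces the per-state bound you want to use.

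The paper's own proof appeals to Lemma~\ref{lem:multiplicity} in the same per-issuance way, so it has the same hole. The ElGamal $\trace$ avoids the problem only because its digit search is confined to $D$, which admits only issued slots with coefficients at most $2^{h-1}$.

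A correct argument needs a change to the construction or the model. One option is to have the Damg{\aa}rd--Jurik $\trace$ reject any digit outside such a set $D$. Another is to bound by $K$ the number of issuances per identifier that can coexist and take $r>K\cdot 2^{h-1}$.
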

\begin{proof}
  As in Theorem~\ref{thm:unforge}, a reduction against EUF-CMA simulates the $\genwatermark$ oracle with the signing oracle and its own encryption keys, so every issued tag in a history $H$ accepted by $\mathsf{Verify}$ was, except with negligible probability, produced by the oracle and encrypts $N^{h-h'} r^{i-1}$ with $\tid_i \in \CQ$, or zero. Induction along $H$ shows that $w^*$ encrypts a sum $\sum_j \kappa_j N^{t_j}r^{i_j-1}$ with every $\tid_{i_j}\in\CQ$: degradation multiplies the plaintext by $N$, the payment transition copies the degraded sender contribution into at most two outputs while passing the recipient tag through once, the ingestion step merges one issued contribution at its issuance depth, and $H$ consumes every tag at most once, so the multiplicities $\kappa_j$ obey Lemma~\ref{lem:multiplicity}. With $r>2^{h-1}$ and $N>r^n$, no digit carries, so tracing returns only identifiers in $\CQ$. \qed
\end{proof}

\begin{theorem}[Unlinkability, Damg{\aa}rd--Jurik]
  \label{thm:dj-unlink}
  Under DCR (Definition~\ref{def:dcr}), the Damg{\aa}rd--Jurik instantiation satisfies unlinkability. In a ledger deployment, zero knowledge of the host transaction proof additionally hides the consumed account states.
\end{theorem}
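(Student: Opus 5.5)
I will mirror the proof of Theorem~\ref{thm:unlink}, with DCR playing the role of DDH. The adversary submits $(w_0,w_1)$ and receives $w'=w_b^N\cdot\rho^{N^h}\bmod N^{h+1}$ for a uniform $\rho\in\BZ_{N^{h+1}}^*$. The plan is to replace the rerandomizer $\rho^{N^h}$, a uniform $N^h$-th residue, by a uniform element $z\in\BZ_{N^{h+1}}^*$. Once this is done, $w_b^N\cdot z$ is uniform on $\BZ_{N^{h+1}}^*$ and independent of $b$.

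\textbf{Ciphertext space.} First I fix what the adversary may submit. Membership in $\BZ_{N^{h+1}}^*$ is publicly checkable via $\gcd(w,N)=1$, so the game may reject (or treat as invalid) any $w_j$ outside this group. For $w_b$ in the group, $w_b^N$ is a unit, so multiplication by it permutes $\BZ_{N^{h+1}}^*$.

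\textbf{Reduction.} Given a DCR challenge $(N,z)$, the distinguisher $\CB$ proceeds as follows.
\begin{itemize}
\item It sets $\mathsf{pk}_{\text{enc}}\gets N$ and generates its own signing key pair.
\item It answers $\genwatermark$ queries by encrypting under $N$ with fresh randomness and signing. Encryption needs only $N$, so the simulation is perfect.
\item On receiving $(w_0,w_1)$, it samples $b$, sets $w'\gets w_b^N\cdot z \bmod N^{h+1}$, and outputs $1$ iff $\CA$ guesses $b$.
\end{itemize}
If $z=\rho^{N^h}$ with $\rho$ uniform, then $w'$ is distributed exactly as $\degrade(w_b)$. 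If $z$ is uniform in $\BZ_{N^{h+1}}^*$, then $w'$ is uniform and independent of $b$, so $\CA$ succeeds with probability exactly $\tfrac12$. Hence $\CA$'s advantage is at most the DCR advantage of $\CB$.

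\textbf{Main obstacle.} The delicate point is matching the DCR assumption to the degree $s=h$ used here. Definition~\ref{def:dcr} is stated for general $s$, and Damg{\aa}rd--Jurik~\cite{DBLP:conf/pkc/DamgardJ01} show the degree-$s$ version is equivalent to $s=1$. I would therefore invoke Definition~\ref{def:dcr} directly with $s=h$ and cite this equivalence for the standard assumption.

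\textbf{Ledger remark.} The statement that the consumed account states are hidden in a deployment follows from zero knowledge of the host transaction proof, exactly as in Theorem~\ref{thm:unlink}.
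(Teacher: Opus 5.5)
Your proposal is correct and follows the paper's proof essentially step for step. It restricts submitted tags to $\mathbb{Z}_{N^{h+1}}^*$ via the gcd check, builds a DCR distinguisher at degree $s=h$ that puts the challenge $z$ in place of the rerandomizer $\rho^{N^h}$ while simulating tagging queries with $N$ and a self-generated signing key, and notes that multiplication by the unit $w_b^N$ makes $w'$ uniform and independent of $b$ when $z$ is uniform.
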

\begin{proof}
  The adversary submits $(w_0, w_1)$ and receives $w' \leftarrow \degrade(w_b)$ for a uniform bit $b$. A tag parses as an element of $\BZ_{N^{h+1}}^*$, membership being publicly checkable via a gcd with $N$, so we may assume $w_0, w_1 \in \BZ_{N^{h+1}}^*$.

  Degradation returns $w' = w_b^{N} \rho^{N^{h}} \bmod N^{h+1}$ for uniform $\rho \in \BZ_{N^{h+1}}^*$. Construct a distinguisher $\CB$ for DCR with $s = h$: on challenge $(N, z)$, answer tagging queries by encrypting under $N$ and signing with a self-generated key, sample $b$, give $\CA$ the challenge $w' = w_b^{N} \cdot z \bmod N^{h+1}$, and output $1$ if $\CA$ guesses $b$. If $z = \rho^{N^{h}}$ is a random $N^{h}$-th residue, $w'$ is distributed exactly as $\degrade(w_b)$. If $z$ is uniform in $\BZ_{N^{h+1}}^*$, then $w' = w_b^{N} \cdot z$ is uniform in $\BZ_{N^{h+1}}^*$ and independent of $b$, since multiplication by the fixed unit $w_b^{N}$ permutes the group; $\CA$ then guesses $b$ with probability exactly $\frac{1}{2}$. The advantage of $\CA$ is therefore bounded by the DCR advantage of $\CB$, and $\Pr[b' = b] \leq \frac{1}{2} + \negl(\secparam)$. \qed
\end{proof}

\subsection{Efficiency and comparison}
\label{sec:comparison}

Both instantiations implement the same interface; their principal trade-off is between compact tags and scalable tracing. We compare them at $128$-bit security and choose the smallest admissible encoding base, $r>2^{h-1}$. Independently of the instantiation, nilpotency index $h$ yields a strict chain
$\BM \supsetneq \nu\BM \supsetneq \cdots \supsetneq \nu^h\BM=\{0\}$. Each inclusion reduces the size of the plaintext space by at least a factor of two, so representing all $h$ degradation depths requires at least $h$ bits. Linear growth in the hop budget is therefore inherent to this template, although the constants depend strongly on the underlying ring.

An exponential ElGamal tag contains two elements of $\BG\subseteq\BZ_p^*$ and therefore occupies $2\lceil\log_2p\rceil$ bits. Two constraints size it: a base-$q$ digit must hold the base-$r$ encoding of an entire traced set, so $q>r^n$, and $\BG$ must embed in $\BZ_p^*$, so $q^h\mid p-1$. With $h=10$ and a $256$-bit $q$, this gives a $0.75$ KB tag and an identifier capacity of $n_{\max}=28$. Capacity becomes expensive beyond this point because enlarging $q$ also enlarges $p$ by a factor of $h$: supporting $341$ identifiers requires a roughly $3072$-bit $q$, a prime $p$ of at least $30720$ bits, and hence a $7.5$ KB tag. Moreover, tracing uses a meet-in-the-middle search costing $O(h\,r^{K/2})$ group operations for the $K$ identifiers the authority holds live at once (Section~\ref{sec:elgamal}). ElGamal is thus compact only when the traced set is small; its dependence on $K$, rather than on the capacity $n_{\max}$, is the limiting factor. Because $q^h$ is built into the group order, changing the maximum budget $h$ also requires new public parameters.

A Damg{\aa}rd--Jurik tag is a single element modulo $N^{h+1}$ and occupies $(h+1)\lceil\log_2N\rceil$ bits. For $h=10$ and a $3072$-bit modulus this is $4.1$ KB, about five and a half times the size of the minimal ElGamal tag. In return, the same modulus gives $n_{\max}=341$, since capacity depends only on $r^n<N$, and tracing takes $O(nh)$ digit operations regardless of how many identifiers occur in a merged tag. Damg{\aa}rd--Jurik therefore avoids both the exponential search in $K$ and the sharp tag-size increase that ElGamal incurs at larger capacities. The same key also serves different values of $h$: raising $s$ enlarges the plaintext ring but requires neither a fresh modulus nor a stronger assumption, since DCR for parameter $s$ is equivalent to the case $s=1$.

This scalability comes with more expensive online arithmetic. Damg{\aa}rd--Jurik operates on $(h+1)\lceil\log_2N\rceil$-bit values, and the direct rerandomization $\rho^{N^h}$ uses an $h\lceil\log_2N\rceil$-bit exponent. Precomputing $\zeta=\zeta_0^{N^h}$ and using $\zeta^\rho$ with $\rho\xleftarrow{\$}[0,N)$ shortens the online exponent, but restricts the randomness to $\langle\zeta\rangle$ and consequently requires a subgroup variant of DCR. We retain the direct construction so that Theorems~\ref{thm:dj-indist} and~\ref{thm:dj-unlink} rely on standard DCR as stated.

Table~\ref{tab:instantiations} collects these trade-offs. Exponential ElGamal is the compact choice for modest hop budgets and few simultaneous traces. Damg{\aa}rd--Jurik is better suited to larger identifier sets and heavily merged tags, and offers polynomial-time tracing under a standard assumption, at the cost of larger tags and more expensive arithmetic.

\begin{table}[ht!]
\centering
\caption{The two instantiations of nilpotent degradation, at $128$-bit security and $r > 2^{h-1}$.}
\label{tab:instantiations}
\begin{tabular}{|lll|}
\hline
 & \textbf{Exponential ElGamal} & \textbf{Damg{\aa}rd--Jurik} \\
\hline
Plaintext ring & $\BZ_{q^h}$ & $\BZ_{N^{h}}$ \\
Nilpotent element & prime $q$ & modulus $N$ \\
Degradation & $c \mapsto c^{q}$ & $c \mapsto c^{N}$ \\
Digit size (bits) & $\max(2\secparam,\, n \log_2 r)$ & $\log_2 N$ \\
Tag size (bits) & $2\max(3072,\, h \cdot \text{digit})$ & $(h+1) \log_2 N$ \\
Tag size at $h = 10$ & $0.75$ KB & $4.1$ KB \\
Capacity $n_{\max}$ & $28$ & $341$ \\
Trace & $O(h \cdot r^{K/2})$ & $O(nh)$ \\
Assumption & DDH in $\BG$ (Def.~\ref{def:ddh-prime-power}) & DCR (Def.~\ref{def:dcr}) \\
Changing $h$ & fresh parameters & same key \\
\hline
\end{tabular}
\end{table}

\section{Discussion and applications}\label{sec:applications}

We first discuss how ECT interacts with disclosure and ledger policy, and then outline applications in permissioned payment systems such as central bank digital currencies and regulated stablecoins. These applications combine ECT with institutional measures; the tracing primitive alone neither identifies the current account holder nor authorizes an enforcement action.

\paragraph{On disclosure by the authority.}
Disclosure necessarily removes uncertainty for the identifiers that the authority names. It does not, however, help distinguish the status of an undisclosed challenge identifier, provided that the tracing key is not revealed. Tags are independently randomized encryptions, so auxiliary tagged or dummy samples are exactly what the tagging oracle of Definition~\ref{def:indist} provides; a standard hybrid argument then reduces any remaining advantage to Theorem~\ref{thm:indist} or~\ref{thm:dj-indist}. The same reasoning permits the disclosed list to contain false entries. This observation concerns the cryptographic view only: side information about an investigation may reveal tracing status independently of the tags.
\paragraph{Delegating degradation.}
Degradation uses no secret: it raises a ciphertext to a public exponent and rerandomizes, and both operations are available to any party holding $\pp$. Which party advances a tag is therefore a deployment choice, subject to two conditions. That party must be able to update the structure binding tags to account states, and the ledger must enforce exactly the degradation schedule prescribed by its hop policy. Three deployments illustrate this flexibility.
\emph{Coin holders} degrade their own tags inside the transactions they submit. This is the deployment of Section~\ref{sec:system-model}, in which a hop is an outgoing transfer.
\emph{Custodial intermediaries}, such as payment service providers or exchanges holding accounts on behalf of their customers, can degrade instead. They already assemble the transactions of those accounts, so the construction is unchanged, and clients with limited resources are relieved of producing the proof. Custody grants no tracing information: the intermediary manipulates ciphertexts under the authority's key and learns neither whether a tag is real nor how much of its budget remains.
\emph{Validators} can instead degrade every published tag at a fixed block height or once per epoch. If tags and their updates are public, all parties can recompute $c\mapsto c^\nu$, so no zero-knowledge proof of degradation is needed. This changes the budget from transaction hops to ledger epochs and ages dormant accounts as well. It also changes the evasion trade-off: waiting consumes the budget, whereas holder-driven degradation preserves a dormant tag but requires transactions to exhaust it (Section~\ref{sec:technical}). Ledger-driven degradation requires a publicly updatable tag structure, unlike the owner-controlled account states of Section~\ref{sec:system-model}; it is therefore a different ledger model, not merely a parameter choice.
\paragraph{Beyond coin tracing.}
ECT records that a ledger state descends from a tagged identifier; turning that signal into action requires the ledger or operator to associate the state with an account and apply an external policy. Possible responses include freezing the current account, imposing transaction limits, restricting counterparties, or requesting additional records under the applicable legal process. ECT can also support the investigation of structuring by preserving the tagged contribution across splits and merges, but it does not itself classify a transaction pattern as suspicious. These measures are complementary and may be applied only while the tag is traceable, or later on the basis of evidence recorded during that window.
\paragraph{Prospective versus retrospective tracing.}
A central limitation is that ECT is \emph{prospective}: $\trace$ can recover an identifier only if the authority previously encoded it with $b=1$. An untagged contribution may later share a state with tagged funds, but its own origin cannot be recovered retroactively, even with $\sktrace$. This separation limits retrospective mass surveillance. Deployments requiring retrospective tracing need an additional mechanism, for example a distinct account type whose transactions carry an opening trapdoor. Distributing that trapdoor among a threshold committee can prevent unilateral tracing, as in PEReDi~\cite{KiayiasKS22} and UTT~\cite{cryptoeprint:2022/452:utt}; such a mechanism is complementary to, rather than an extension of, ECT.
\paragraph{User-requested protective tracing.}
A user facing kidnapping, extortion, or account-compromise risk may voluntarily request traceable tags for their holdings. For example, a merchant could request $b=1$ with budget $h'=5$ and either announce the protection as a deterrent or keep it hidden using indistinguishability. If the funds are transferred under duress, the authority can inspect subsequent ledger tags and use $\trace$ to recognize descendants of the merchant's tagged state. Where the ledger links those states to enforceable accounts, an authorized party may then freeze or investigate them. After $h'$ transaction hops, ECT no longer supplies that tracing signal.
\paragraph{Time-bounded regulatory investigation.}
A court order may authorize prospective tracing of a suspicious account for at most $h'$ transaction hops. The authority issues tags with $h'\le h$, after which degradation removes the encoded identifier without requiring the authority to revoke it. This enforces a transaction-distance bound, not a wall-clock deadline; a time-based warrant instead requires the epoch-driven variant above or an additional ledger rule.
\paragraph{Privacy budget integration.}
ECT can be combined with a ledger-enforced privacy budget, such as a zero-knowledge proof that a user's monthly volume remains below a threshold. Crossing the threshold would trigger a separate policy that obtains an authority-issued traceable tag for the resulting state; implementing this trigger without revealing compliant users requires additional protocol machinery. The budget determines \emph{when} monitoring begins, while ECT bounds how far the resulting signal propagates. Because tags survive splits and merges, the combination can support investigations of structuring, although detection of the pattern remains outside the ECT primitive.
\paragraph{Temporary institutional transparency.}
Exchanges and custodians may ask the authority to issue traceable tags for designated deposits or withdrawals. The authority's signature authenticates each initial tag, while indistinguishability hides whether it encodes an identifier. Auditors can then follow the propagation of tagged funds for a bounded number of hops, provided they receive the relevant tracing result and ledger view. This offers bounded transaction-flow accountability; it does not by itself prove solvency or correct custody, which require separate accounting commitments and proofs.

\section{Conclusion}
We have introduced ephemeral coin tracing (ECT), a cryptographic primitive that provides bounded and accountable surveillance in anonymous payment systems. ECT tags degrade algebraically and expire exactly at a predetermined hop budget $h$, giving the authority no tracing capability beyond this bound. The mechanism is a nilpotent map on the plaintext ring of an additively homomorphic encryption scheme, applied publicly on ciphertexts; the host ledger's zero-knowledge transaction statement enforces the complete degrade-and-merge transition while binding tags to hidden account states. We instantiate the algebraic mechanism twice, over a cyclic group of order $q^h$, where multiplication by $q$ in the exponent is nilpotent and tags are most compact, and over Damg{\aa}rd--Jurik encryption, where the modulus plays that role and tracing becomes polynomial under a standard assumption.
The public parameters $h$ and $n$ render the authority's surveillance capacity explicit and auditable, and the indistinguishability property prevents users from determining whether they are being traced. We have demonstrated multiple concrete applications in permissioned payment systems. 
Our work is an important first step that balances targeted tracing with accountability and privacy guarantees.

\bibliographystyle{splncs04}
\bibliography{refs}
\appendix

\section{Additional cryptographic notions and security definitions}\label{app:cryptodefs}

\subsection{Public key encryption}\label{app:pke}
Here, we recall public key encryption and IND-CPA security.

\begin{definition}[Public key encryption scheme]
    A public key encryption scheme over a message space $\mathbb{M}$ with security parameter $1^\lambda$ is a tuple of 3 PPT algorithms $(\mathsf{KeyGen}, \mathsf{Enc}, \mathsf{Dec})$ such that
    \begin{itemize}
        \item $\mathsf{KeyGen}(1^\lambda) \rightarrow (\mathsf{pk},\mathsf{sk})$ generates a public/secret key pair.
        \item $\mathsf{Enc}(\mathsf{pk}, m) \rightarrow c$ encrypts message $m \in \mathbb{M}$ to ciphertext $c$.
        \item $\mathsf{Dec}(\mathsf{sk}, c) \rightarrow m$ decrypts ciphertext $c$ to message $m$ or $\bot$.
    \end{itemize}
We require \emph{correctness}, i.e., for all $\lambda \in \mathbb{N}$, $(\mathsf{pk},\mathsf{sk}) \leftarrow \mathsf{KeyGen}(1^\lambda)$ and $m \in \mathbb{M}$, $\Pr[\mathsf{Dec}(\mathsf{sk}, \mathsf{Enc}(\mathsf{pk}, m)) = m] = 1$.
\end{definition}

We further require that our public-key encryption scheme is IND-CPA secure against computationally bounded adversaries.
IND-CPA security is defined by the following indistinguishability game between an adversary $\mathcal{A}$ and challenger $\mathcal{C}$.

\paragraph{IND-CPA indistinguishability game $\mathsf{PubK^{cpa}_{\mathcal{A},\Pi}}(\lambda)$.}
\begin{enumerate}
    \item $\mathcal{C}$ samples $(\mathsf{pk},\mathsf{sk}) \leftarrow \mathsf{KeyGen}(1^\lambda)$.
    \item $\mathcal{A}$ gets oracle access to $\mathsf{Enc}(pk, \cdot)$. It outputs a pair of messages $(m_0,m_1)$ of the same length and sends them to $\mathcal{C}$.
    \item $\mathcal{C}$ samples a uniform random bit $b \in \{0,1\}$, and computes the challenge ciphertext $c^* = \mathsf{Enc}(\mathsf{pk},m_b)$. $\mathcal{C}$ sends the challenge ciphertext $c^*$ to $\mathcal{A}$.
    \item $\mathcal{A}$ outputs a guess bit $b'$. The output of the game is $1$ if $b'=b$ and $0$ otherwise.
\end{enumerate}

\begin{definition}[IND-CPA security]
    A public key encryption scheme $\Pi = (\mathsf{KeyGen}, \mathsf{Enc}, \mathsf{Dec})$ is IND-CPA secure if for every PPT adversary $\mathcal{A}$, there exists a negligible function $\mathsf{negl}$ such that $$\Pr[\mathsf{PubK^{cpa}_{\mathcal{A},\Pi}}(\lambda)=1] \leq \frac{1}{2} + \mathsf{negl}(\lambda).$$
\end{definition}

\subsection{Signature scheme}\label{app:signature}
Here, we recall signature schemes and EUF-CMA security.

\begin{definition}[Signature scheme]
    A (digital) signature scheme over a message space $\mathbb{M}$ with security parameter $1^\lambda$ is a tuple of 3 PPT algorithms $(\mathsf{KeyGen}, \mathsf{Sign}, \mathsf{Ver})$ such that
    \begin{itemize}
        \item $\mathsf{KeyGen}(1^\lambda) \rightarrow (\mathsf{pk},\mathsf{sk})$ generates a public/secret key pair.
        \item $\mathsf{Sign}(\mathsf{sk}, m) \rightarrow \sigma$ signs a message $m \in \mathbb{M}$ to output a signature $\sigma$.
        \item $\mathsf{Ver}(\mathsf{pk}, m, \sigma) \rightarrow b$ takes as input a public key $\mathsf{pk}$, a message $m$, and a signature $\sigma$ and outputs a bit $b$ representing if the $\sigma$ is a valid signature on $m$ or not.
    \end{itemize}
    
We require \emph{correctness}, i.e., for all $\lambda \in \mathbb{N}$, $(\mathsf{pk},\mathsf{sk}) \leftarrow \mathsf{KeyGen}(1^\lambda)$ and $m \in \mathbb{M}$, $\Pr[\mathsf{Ver}(\mathsf{pk}, m, \mathsf{Sign}(\mathsf{sk}, m)) = 1] = 1$.
\end{definition}

We further require that our signature scheme is EUF-CMA secure against computationally bounded adversaries.
EUF-CMA security is defined by the following indistinguishability game between an adversary $\mathcal{A}$ and challenger $\mathcal{C}$.

\paragraph{EUF-CMA indistinguishability game $\mathsf{SigForge^{EUF-CMA}_{\mathcal{A},\Sigma}}(\lambda)$.}

\begin{enumerate}
    \item $\mathcal{C}$ samples $(\mathsf{pk},\mathsf{sk}) \leftarrow \mathsf{KeyGen}(1^\lambda)$.
    \item $\mathcal{A}$ receives $\mathsf{pk}$ and gets oracle access to $\mathsf{Sign}(\mathsf{sk},\cdot)$. $\mathcal{A}$ can repeatedly query the signing oracle $\mathsf{Sign}(\mathsf{sk},\cdot)$ with messages $(m_1, \dots, m_Q)$ of its choosing to get signatures $(\sigma_1, \dots, \sigma_Q)$.
    \item $\mathcal{A}$ now outputs a new message and signature pair $(m^*,\sigma^*)$ such that $m^* \notin (m_1, \dots, m_Q)$, and sends $(m^*,\sigma^*)$ to $\mathcal{C}$.
    \item The output of the game is $1$ if $\mathsf{Ver}(\mathsf{pk}, m^*, \sigma^*)=1$ and $m^* \notin (m_1, \dots, m_Q)$, and $0$ otherwise.
\end{enumerate}

\begin{definition}[EUF-CMA security]
    A signature scheme $\Sigma = (\mathsf{KeyGen}, \mathsf{Sign}, \mathsf{Ver})$ is EUF-CMA secure if for every PPT adversary $\mathcal{A}$, there exists a negligible function $\mathsf{negl}$ such that $$\Pr[\mathsf{SigForge^{EUF-CMA}_{\mathcal{A},\Sigma}}(\lambda)=1] \leq \mathsf{negl}(\lambda).$$
\end{definition}

\subsection{Non-interactive zero-knowledge (NIZK) proof system}\label{app:nizk}
We recall NIZK proof systems and their security.

\begin{definition}[NIZK proof system]
A pair of PPT algorithms $(\mathcal{P,\mathcal{V}})$ with security parameter $\lambda$ is a NIZK proof system for an NP relation $R$ if there exists some polynomial $\mathsf{poly}$ such that

\begin{enumerate}
    \item (Completeness.) For $x \in R \cap \{0,1\}^\lambda$, and all witnesses $w$ for $x$,
    $$\Pr[r \leftarrow\{0,1\}^{\mathsf{poly}(\lambda)}; \pi \leftarrow \mathcal{P}(r,x,w); \mathcal{V}(r,x,\pi) =1] \geq 1 - \negl(\lambda).$$
    \item (Soundness.) For all $x \in  \{0,1\}^\lambda \setminus R$, and for all algorithms $\mathcal{P}^*$:
    $$\Pr[r \leftarrow\{0,1\}^{\mathsf{poly}(\lambda)}; \pi \leftarrow \mathcal{P}^*(r,x); \mathcal{V}(r,x,\pi) =1] \leq \negl(\lambda).$$
    \item (Zero-knowledge.) There exists a PPT algorithm $\mathsf{Sim}$ such that for any $x \in R \cap \{0,1\}^\lambda$ and any witness $w$ for $x$, the following ensembles of random variables are computationally indistinguishable:

    $$(1) \  \{r \leftarrow\{0,1\}^{\mathsf{poly}(\lambda)}; \pi \leftarrow \mathcal{P}(r,x,w) \}_\lambda$$
    $$(2) \ \{(r,\pi) \leftarrow \mathsf{Sim}(x) \}_\lambda$$

\end{enumerate}
    
\end{definition}

\end{document}